\theoremstyle{definition}
\newtheorem{definition}{Definition}
\newtheorem{problem}{Problem}
\theoremstyle{plain}
\newtheorem{theorem}{Theorem}
\newtheorem{corollary}[theorem]{Corollary}
\newtheorem{proposition}[theorem]{Proposition}
\newtheorem{lemma}[theorem]{Lemma}
\theoremstyle{remark}
\newtheorem{remark}{Remark}
\algnewcommand\algorithmicleftcomment[1]{\(\triangleright\) \textit{#1}}%
\algrenewcommand\algorithmiccomment[1]{\hfill\(\triangleright\) \textit{#1}}%
\DeclareMathOperator*{\argmax}{arg\,max}
\DeclareMathOperator*{\softmax}{softmax}
\newcommand{\LL}[1]{{\color{blue}[LL: #1]}}
\newacronym{mdp}{MDP}{Markov Decision Process}
\newacronym{rss}{RSS}{Responsibility Sensitive Safety}
\newacronym{tpg}{TPG}{two-player game}
\newacronym{sos}{SoS}{Sum-of-Squares}
\newacronym{qcp}{QPC}{Quadratically Constrained Program}
\newacronym{sdp}{SDP}{Semi-definite Programming}
\newacronym{fcnn}{FCNN}{Fully-connected Neural Network}
\newacronym{bnn}{BNN}{Bayesian Neural Network}
\newacronym{mcmc}{MCMC}{Markov Chain Monte Carlo}
\newacronym{hmc}{HMC}{Hamiltonian Monte Carlo}
\newacronym{nuts}{NUTS}{No U-Turn Sampler}
\newacronym{dtss}{DT-SS}{Discrete-Time Stochastic System}
\newacronym{pac}{PAC}{Probably Approximately Correct}
\newacronym{psd}{PSD}{Positive Semi-Definite}
\newacronym{smt}{SMT}{Satisfiability Modulo Theory}
\newacronym{milp}{MILP}{Mixed-Integer Linear Programming}
\newacronym{lbp}{LBP}{Linear Bound Propagation}
\newacronym{ibp}{IBP}{Interval Bound Propagation}
\newacronym{nn}{NN}{Neural Network}
\newacronym{nbf}{NBF}{Neural Barrier Function}
\title{Safety Certification for Stochastic Systems via Neural Barrier Functions}
\author{%
  Frederik Baymler Mathiesen \\
  Delft Center for Systems \& Control\\
  Delft University of Technology\\
  \texttt{f.b.mathiesen@tudelft.nl} \\
  \And
   Simeon Calvert \\
  Department for Transport \& Planning\\
  Delft University of Technology\\
  \texttt{s.c.calvert@tudelft.nl} \\
   \And
   Luca Laurenti \\
  Delft Center for Systems \& Control\\
  Delft University of Technology\\
  \texttt{l.laurenti@tudelft.nl} \\
}
\begin{document}

\maketitle

\begin{abstract}

Providing non-trivial certificates of safety for non-linear stochastic systems is an important open problem that limits the wider adoption of autonomous systems in safety-critical applications. One promising solution to address this problem is barrier functions. The composition of a barrier function with a stochastic system forms a supermartingale, thus enabling the computation of the probability that the system stays in a safe set over a finite time horizon via martingale inequalities. However, existing approaches to find barrier functions for stochastic systems generally rely on convex optimization programs that restrict the search of a barrier to a small class of functions such as low degree SoS polynomials and can be computationally expensive. In this paper, we parameterize a barrier function as a neural network and show that techniques for robust training of neural networks can be successfully employed to find neural barrier functions. Specifically, we leverage bound propagation techniques to certify that a neural network satisfies the conditions to be a barrier function via linear programming and then employ the resulting bounds at training time to enforce the satisfaction of these conditions. We also present a branch-and-bound scheme that makes the certification framework scalable. We show that our approach outperforms existing methods in several case studies and often returns certificates of safety that are orders of magnitude larger.
\end{abstract}

\section{Introduction}
Modern autonomous systems are inherently non-linear, incorporate feedback controllers often trained with machine learning techniques, and are subject to uncertainty due to unmodelled dynamics or exogenous disturbances \cite{duriez2017machine}. Despite these complexities, autonomous systems are often employed in safety-critical applications, such as autonomous cars \cite{gordon2015} or air traffic control \cite{wise1991}, where a failure of the system can have catastrophic consequences \cite{Wilko2018, wise1991}. For these applications, computation of probabilistic safety, defined as the probability that the system cannot evolve over time to an unsafe region of the state space, is of paramount importance. Despite the recent efforts \cite{laurenti2020formal, belta2019}, computing certificates that guarantee that probabilistic safety is above a certain threshold  still remains a particularly challenging problem, mainly due to the complexity of these system.

A promising approach for safety verification is the employment of \emph{barrier functions} \cite{ames2019}. Similar to the Lyapunov function approach for proving stability \cite{TanSearchingFC}, barrier functions aims to prove temporal properties of a system without the need to explicitly analyze the flow of the system \cite{ames2019}. In the stochastic setting, a barrier function is a function that when composed with the system forms a non-negative \emph{supermartingale} or a \emph{$c$-martingale} \cite{prajna2007framework}. Then, martingale inequalities can be employed to compute (a lower bound on) the probability that the system remains safe over time \cite{kushner1967stochastic}. The main challenge with this approach is to find a barrier function for a specific system maximizing the certified lower bound of safety. Existing approaches generally formulate the search of a barrier function for a stochastic system as a convex optimization problem by restricting the search of a barrier over a limited class of functions, generally exponential \cite{steinhardt2012finite} or relatively low-degree SOS polynomials \cite{SANTOYO2021109439}, which often leads to overly conservative safety estimates. In this context, neural networks hold great potential due to their universal approximation power and their training flexibility. However, while neural network barrier functions, or simply \gls{nbf}, have already been considered for deterministic systems \cite{dawson2022, Abate2021, Zhao2020}, still no work has explored the potential of neural networks to provide safety certificates for stochastic systems.

\begin{figure}
    \centering
    \begin{subfigure}[b]{0.33\textwidth}
        \centering
        \includegraphics[width=\textwidth]{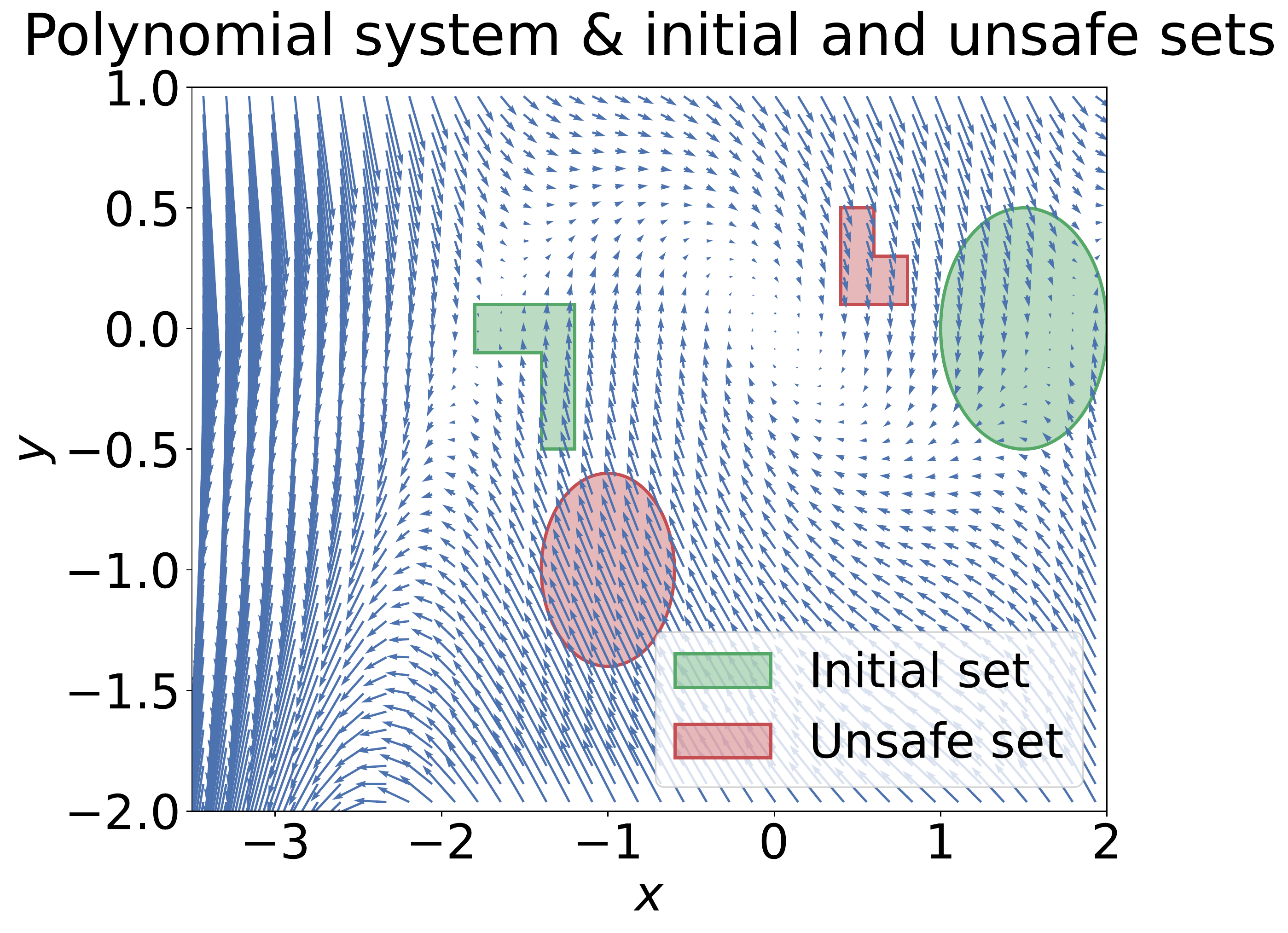}
        \caption{}
    \end{subfigure}\hfill
    \begin{subfigure}[b]{0.33\textwidth}
        \centering
        \includegraphics[width=0.9\textwidth]{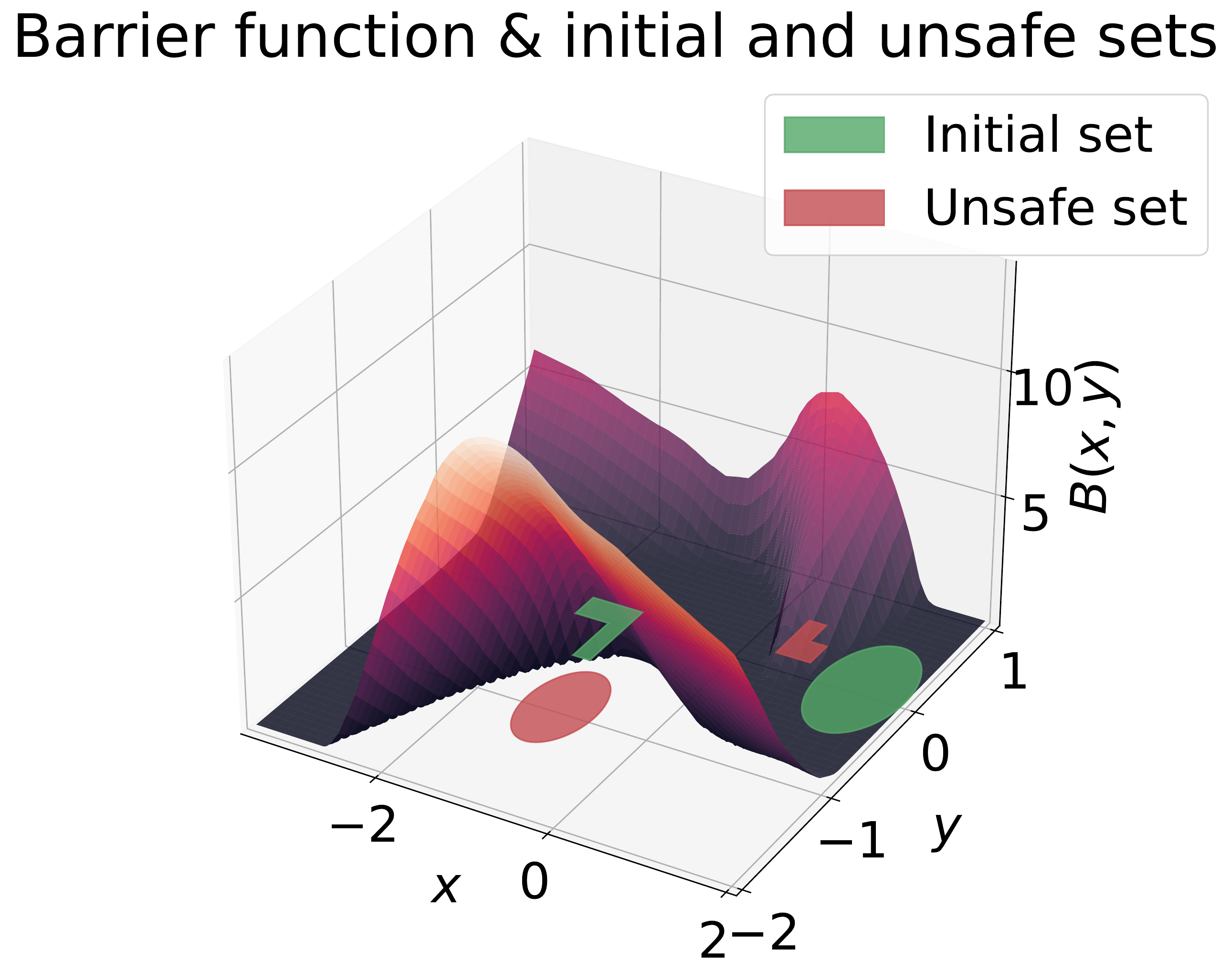}
        \caption{}\label{fig:polynomial_barrier}
    \end{subfigure}\hfill
    \begin{subfigure}[b]{0.33\textwidth}
        \centering
        \includegraphics[width=0.8\textwidth]{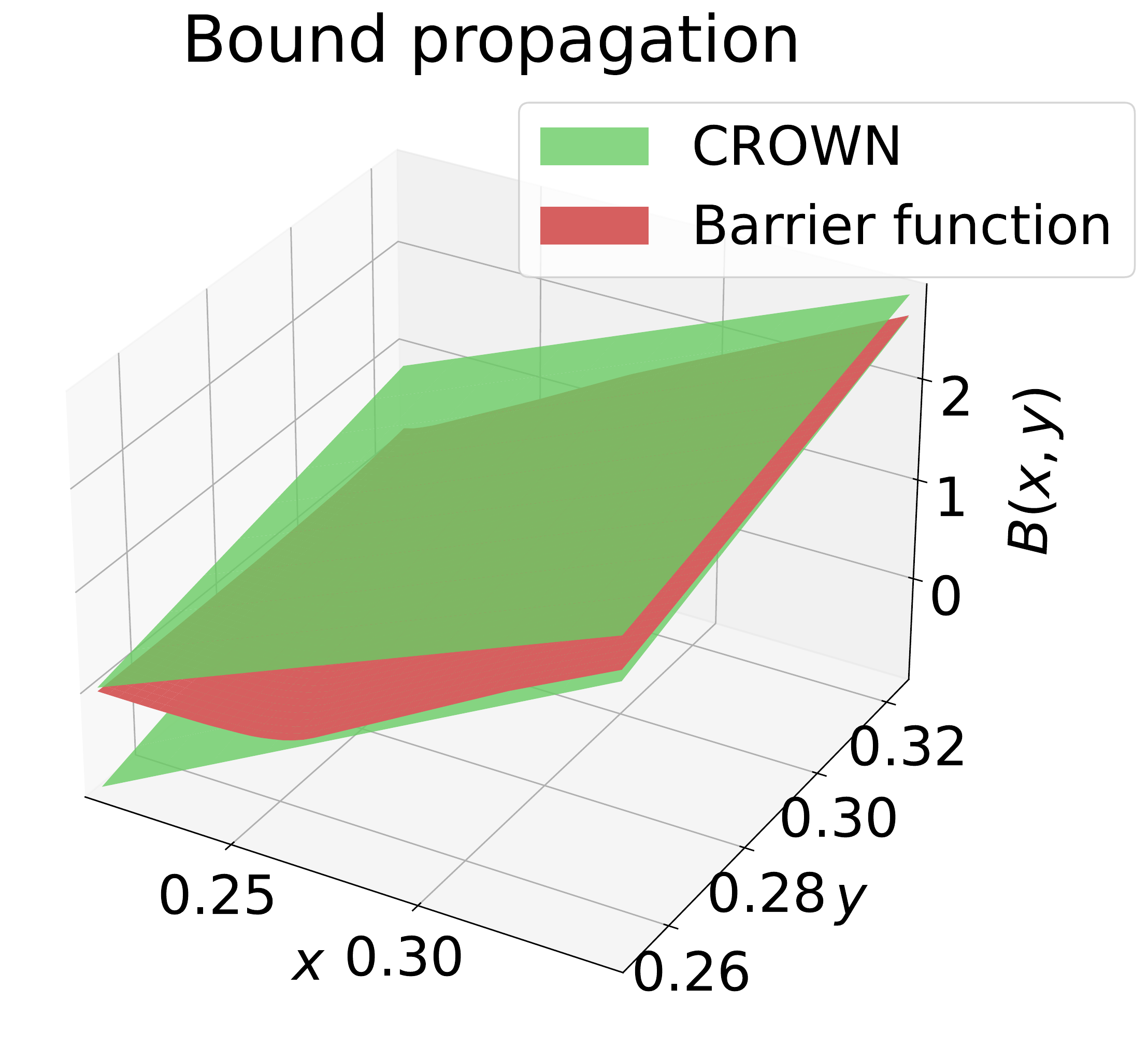}
        \caption{}
    \end{subfigure}\hfill
    \begin{subfigure}[b]{0.33\textwidth}
        \centering
        \includegraphics[width=\textwidth]{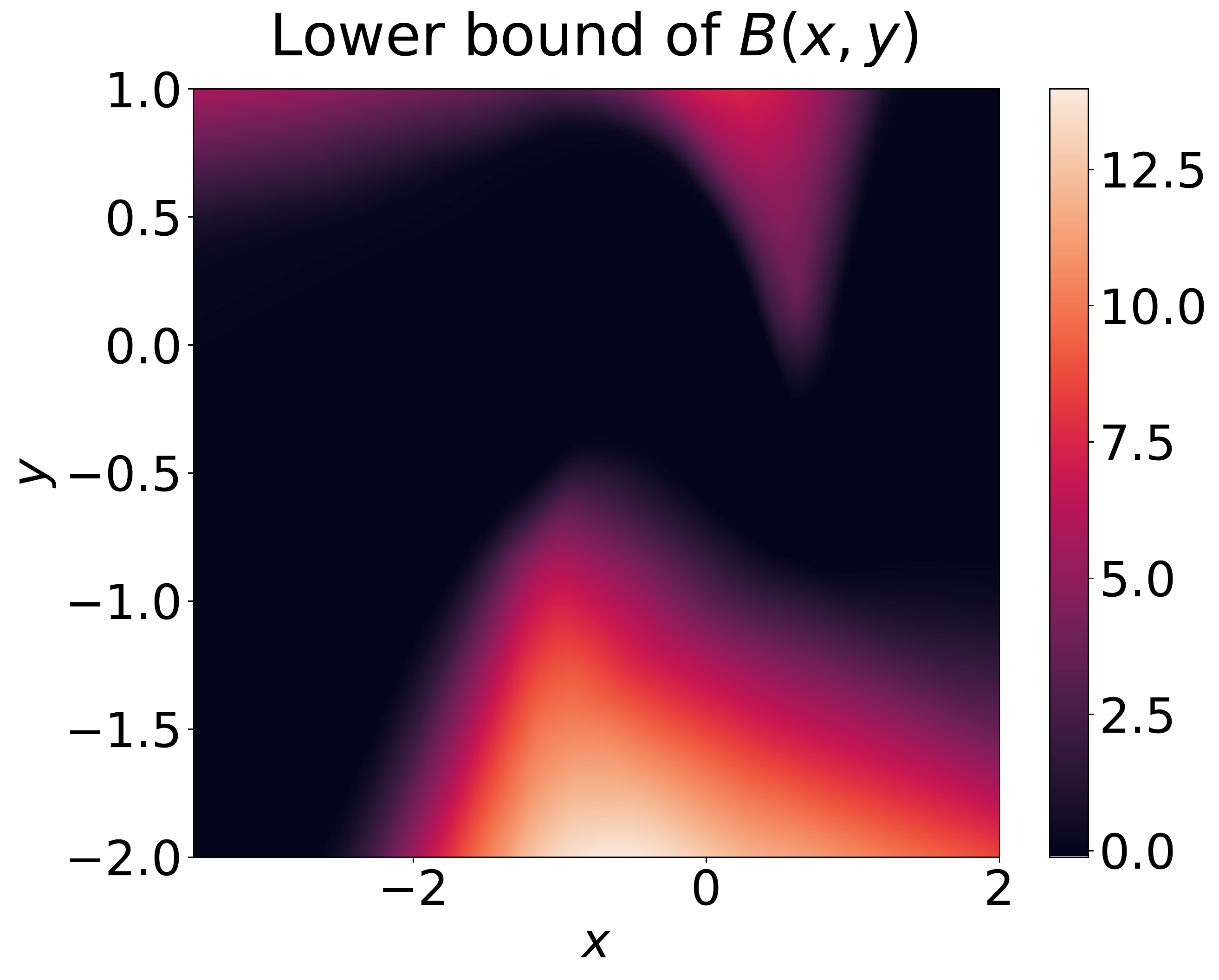}
        \caption{}
    \end{subfigure}\hfill
    \begin{subfigure}[b]{0.33\textwidth}
        \centering
        \includegraphics[width=\textwidth]{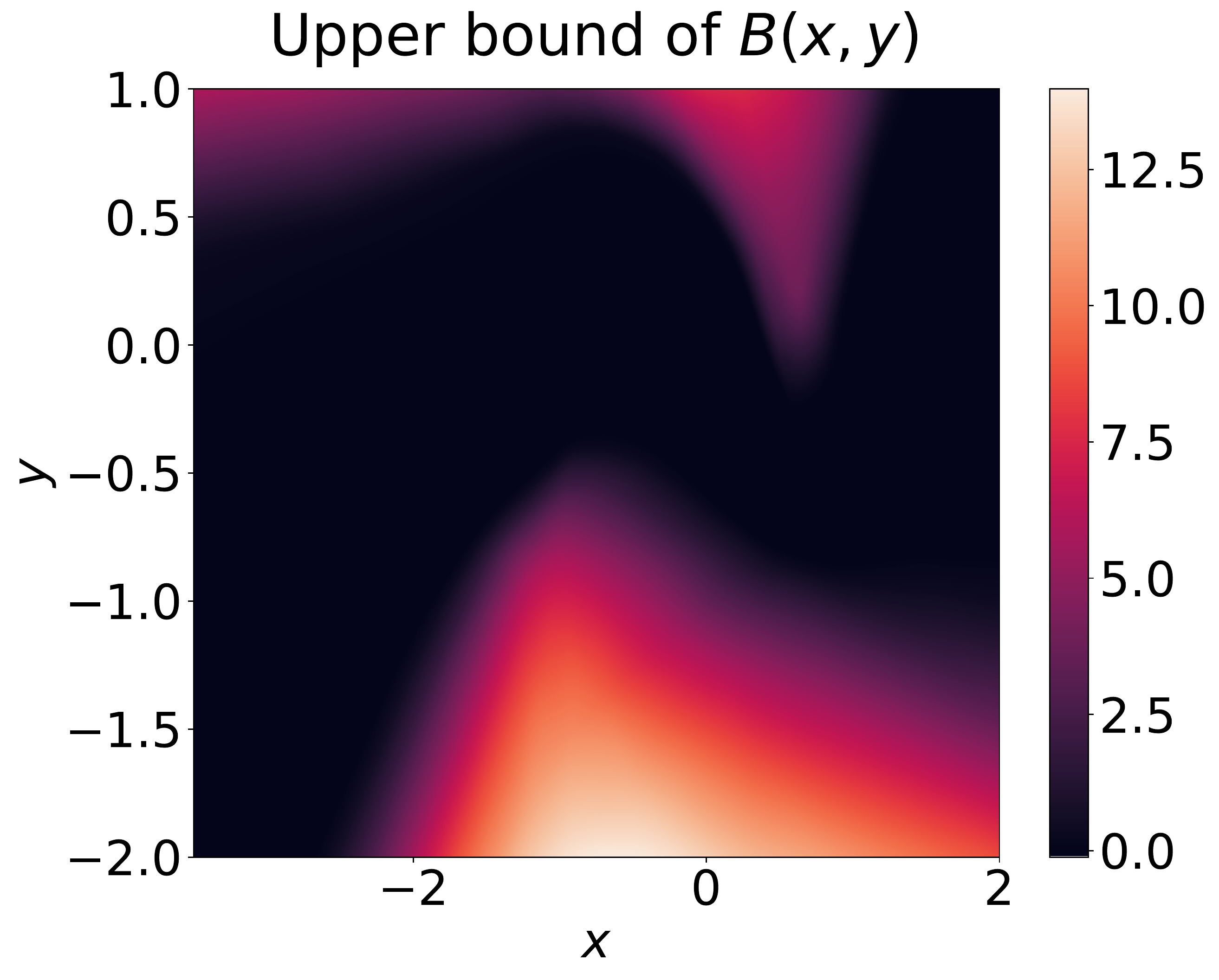}
        \caption{}
    \end{subfigure}\hfill
    \begin{subfigure}[b]{0.33\textwidth}
        \centering
        \includegraphics[width=0.95\textwidth]{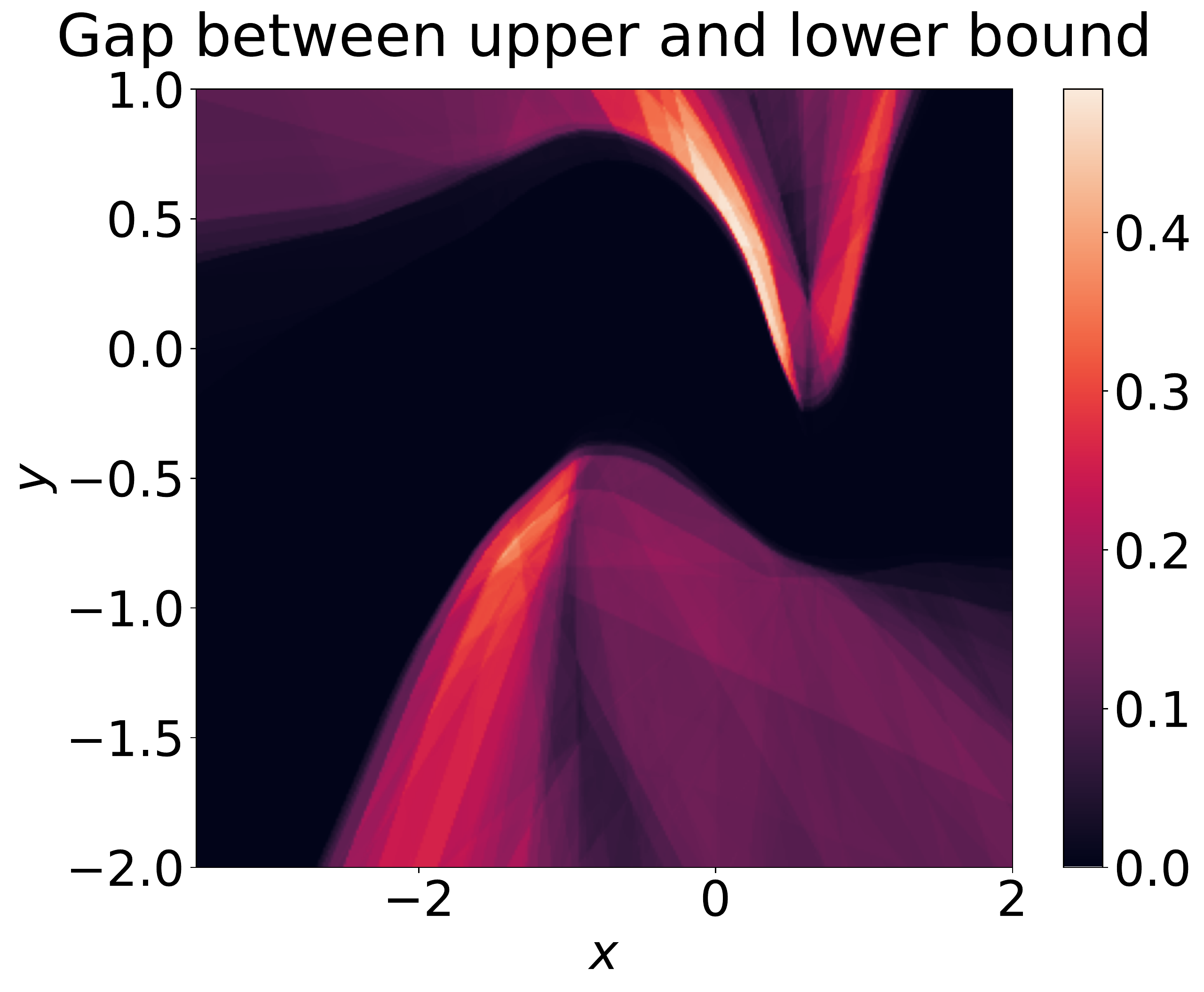}
        \caption{}
    \end{subfigure}
    \caption{Polynomial system with additive Gaussian noise adapted from \cite{Abate2021} and corresponding Neural Barrier Function analyzed using bound propagation. \textbf{(a)} Plot of the nominal system, i.e. of the vector field without noise, and of initial set and unsafe region. 
    \textbf{(b)} Neural Barrier Function (NBF) as synthesized by our framework relative to initial and unsafe sets. The composition of the NBF with the system forms a $c-$martingale.
    \textbf{(c)} Linear relaxation of the Neural Barrier Function for a single hyperrectangular region using CROWN \cite{zhang2018efficient}. 
    \textbf{(d, e)} Bounds of our Neural Barrier Function computed partitioning the state space in a $320 \times 320$ grid and using linear bound propagation in each partition. \textbf{(f)} The gap between the upper and lower bounds. We will use this gap to drive our partitioning.}
    \label{fig:polynomial_system_barrier}
\end{figure}

In this paper, we consider non-linear stochastic systems with additive noise and propose an algorithmic framework to train \glspl{nbf} for these systems. In order to do so, by relying on recent techniques for linear relaxation of neural networks \cite{zhang2018efficient} and the linearity of the expectation operator, we show that the problem of certifying that a neural network is a barrier function, and computing the resulting safety probability, can be relaxed to the solution of a set of linear programs. Specifically, we partition the state space and leverage linear bound propagation techniques \cite{zhang2018efficient, xu2020} to find local linear lower and upper bounds of a neural network. 
For each region in the partition, we use these bounds prove that the composition of the neural network and the system forms a $c$-martingale. 
A branch-and-bound approach is then proposed to improve the scalability of our framework and adaptively refine the partition of the state space. At training time, as common in adversarial training of neural networks \cite{wicker2021bayesian}, we rely on the resulting linear bounds obtained by our certification framework to train a neural network that satisfies the $c$-martingale conditions, while also minimizing the conservatives of the resulting safety certification. An example of our framework is reported in Figure \ref{fig:polynomial_system_barrier}.

We experimentally investigate the suitability of our framework on several challenging non-linear models including a non-linear vehicle dynamics model. We find that our framework consistently outperforms existing state-of-the-art approaches based on \gls{sos} optimization \cite{SANTOYO2021109439}. For instance the experiments on the vehicle dynamics model show that, while the \gls{sos} approach returns a lower bound of safety of $0\%$ or fails due to computational constraints, our method is able to returns a certificate of safety of $87\%$ for the system by employing a neural network barrier function of 3 hidden layers and 128 neurons per layer.

In summary, this paper makes the following main contributions: (i)  We introduce a novel framework to train \glspl{nbf} for a given non-linear stochastic system, (ii) we present a branch-and-bound scheme for the computation of a lower bound on the safety probability via \glspl{nbf} based on the solution of linear programs, and (iii) on multiple benchmarks we show that our framework can train and certify \glspl{nbf} with multiple hidden layers of hundreds of neurons and substantially outperforms state-of-the-art competitive methods.

\subsection{Related works}

\paragraph{Safety certification of dynamical systems}
Safety guarantees for dynamical systems can be generally obtained with two different approaches: abstraction-based methods where the system is abstracted into a transition system, and verified using model checking \cite{laurenti2020formal, Cauchi2019, Prabhakar2013}, and barrier function-based methods where certification proceeds by finding an energy-like function to avoid the need for finding an analytical solution to differential or difference equation \cite{ames2019, Agrawal2017DiscreteCB}.
In the stochastic setting sum-of-squares optimization is the state-of-the-art method for finding polynomial stochastic barrier functions, and they have been already applied to discrete time \cite{steinhardt2012finite, SANTOYO2021109439}, continuous time \cite{SANTOYO2021109439}, hybrid systems \cite{prajna2007framework}, and also for LTL specifications \cite{9157966}.
    
\paragraph{Neural certificates}
Neural networks for representing barrier functions or Lyapunov function are collectively called neural certificates \cite{dawson2022}.
Neural Lyapunov functions were first proposed in \cite{1555943}, but has later been rediscovered and seen a surge \cite{Richards2018TheLN, Abate2021, Zhao2020}.
Lyapunov functions are designed to certify stability \cite{Richards2018TheLN, dawson2022, Abate2021, Jin2020NeuralCF}, while barrier functions are intended to certify safety \cite{dawson2022, Abate2021, Zhao2020}.
The condition required to check that a given neural network is a barrier function differs according to the class of systems considered; for deterministic systems \cite{chang2019neural, Abate2021, Zhao2020}, it is common to use zeroing barriers while supermartingale-based barriers are common for stochastic systems \cite{lechner2021}.
The majority of literature on neural certificates are for deterministic systems, where both continuous-time \cite{chang2019neural, Abate2021, Jin2020NeuralCF, Robey2020LearningCB, dawson2021safe, Zhao2020}, and discrete-time systems  have been studied \cite{Richards2018TheLN, lechner2021}.
To the best of our knowledge, the only paper that focuses on neural certificates for stochastic systems is \cite{lechner2021}. However,  \cite{lechner2021} only considers almost sure asymptotic stability, which is a different, and arguably simpler, problem than the one considered in this paper. Furthermore, in order to certify that the composition of a neural network with a non-linear system gives a valid Lyapunov function, \cite{lechner2021} relies on the Lipschitz constant of the underlying system. In contrast, our approach based on branch-and-bound and linear relaxations is different and may lead to less conservative bounds,
as already observed in verification of NNs when comparing these different certification approaches \cite{zhang2018efficient}.

A major challenge for neural certificates is the verification that the neural network candidate is indeed a valid barrier or Lyapunov function. 
For this reason a plethora of certification methods have been developed, including \gls{smt}-based certification for feed-forward neural networks with general action functions \cite{Abate2021, Zhao2020, chang2019neural, dawson2022}, \gls{milp}-based certification for piecewise affine functions \cite{dai2021lyapunov, 9304201, dawson2022}, finding Lipschitz constants over a grid mesh for the state space \cite{Richards2018TheLN, Jin2020NeuralCF, lechner2021}, and lastly sampling-based methods for validation of certificate conditions but not certification \cite{dawson2021safe}.
\gls{smt}- and \gls{milp}-based certification suffers from lack of scalability, restricting network sizes to 20-30 neurons in 2-3 hidden layers \cite{Abate2021, dai2021lyapunov}.
On the other hand, the Lipschitz method is computationally scalable to large neural networks, but generally very conservative \cite{Fazlyab2019}.
We address scalability with bound propagation, while ameliorating conservativity with a branch-and-bound partitioning scheme. As shown in Section \ref{sec:experiments} our resulting framework can scale to neural networks with multiple hidden layers and hundreds of neurons per layer.

\section{Problem formulation}
\label{sec:ProbForm}

We consider a stochastic discrete-time system described by the following stochastic difference equation:
\begin{equation}
\label{eq:system_equation}
    \mathbf{x}[k+1] = F(\mathbf{x}[k])+ \mathbf{v}[k] 
\end{equation}
where $\mathbf{x}[k]\in \mathbb{R}^n$ is the state of the system at time $k$, 
and $\mathbf{v}[k] $ is an independent random variable distributed according to $p(v)$ over an uncertainty space $V \subseteq \mathbb{R}^n$. 
The function $F : \mathbb{R}^n \times U \to X$ is a continuous function representing the one-step dynamics of System  \eqref{eq:system_equation}.  System  \eqref{eq:system_equation} represents a general model of non-linear stochastic system with additive noise, a class of stochastic systems widely used in many areas \cite{girard2002gaussian,jackson2021strategy}, which also includes non-linear systems in closed loop with feedback controllers synthesized with standard control theory methods (e.g. LQR \cite{aastrom2010feedback}) as well as neural networks \cite{recht2019tour}. 

We denote by $X\subset \mathbb{R}^n$ the state space of System \eqref{eq:system_equation}, which is assumed to be a compact set. However, since in general $\mathbf{x}[k]$ is not guaranteed to always
lie inside a compact set (e.g. if the noise distribution $p(v)$ has unbounded support), as common in the literature \cite{SANTOYO2021109439,prajna2007framework}, we consider the  stopped process $\tilde{\textbf{{x}}}[k]$ defined as follows. 
\begin{definition}{(Stopped Process)}
    Let $\tilde{k}$ be the first exit time of $\textbf{{x}}$ from $X$. Then, the stopped process $\tilde{\textbf{{x}}}[k]$ is defined as 
    $\tilde{\textbf{{x}}}[k]=\begin{cases}
\mathbf{x}[k] \quad \text{if }k<\tilde{k}\\
\mathbf{x}[\tilde{k}] \quad \text{otherwise}
\end{cases}.$
\end{definition}

For a given initial condition $x_0$, $\tilde{\textbf{{x}}}[k]$ is a Markov process with a well defined probability measure $P$ generated by the noise distribution $p(v)$ \cite[Proposition 7.45]{bertsekas2004stochastic} such that for sets $X_0, X_{k+1} \subseteq X$ it holds that
\begin{equation}
\begin{aligned}
    &P(\tilde{\textbf{{x}}}[0] \in X_0) = \mathbbm{1}_{X_0}(x_0)\\
    &P(\tilde{\textbf{{x}}}[k + 1] \in X_{k+1} \mid \tilde{\textbf{{x}}}[k] = x_k) = \int_V \mathbbm{1}_{X_{k+1}}(F(x_k) + v_k) \cdot p(v_k)\,dv_k,
\end{aligned}
\end{equation}
where $\mathbbm{1}_{X_k}(x_k)=\begin{cases}
1 \quad \text{if }x_k \in X_k\\
0 \quad \text{otherwise}
\end{cases}$ is the indicator function for set $X_k.$ 


In this paper we focus on verifying the safety of System \eqref{eq:system_equation} defined as the probability that for a given finite time horizon $H\in \mathbb{N}$,  $\tilde{\textbf{{x}}}[k]$ remains within a safe set $X_s\subseteq X$, which we assume to be a measurable set.  
\begin{problem}[Probabilistic Safety]
    Given a safe set $X_s\subseteq X$, a finite time horizon $H$, and an initial set of states $ X_0 \subseteq X_s$, compute
    \[
        P_{safe}(X_s,X_0,H)=\inf_{x_0\in X_0}P(\forall k\in [0,H], \tilde{{\mathbf{x}}}[k] \in X_s \mid x[0]=x_0)
    \]
\end{problem}
Note that the assumption of a finite time horizon is not limiting. In fact, if $p(v)$ has unbounded support, the probability of entering any unsafe region over an unbounded horizon is trivially $1$.
Furthermore, we should stess that the distribution of $\tilde{\textbf{{x}}}[k]$ is analytically intractable, because $\tilde{\textbf{{x}}}[k]$ is the result of iterative predictions over a non-linear function $F$ with additive noise, which is analytically intractable even if $p(v)$ is Gaussian \cite{girard2002gaussian}. Consequently, the computation of $ P_{safe}(X_s,X_0,H)$ is particularly challenging and requires approximations. Our approach is to rely on barrier functions parameterized as neural networks to compute a sound lower bound of $P_{safe}$. 

\section{Background on Local Relaxations of Neural Networks}\label{sec:prelim_crown}
\Glspl{nn} are highly non-linear and more importantly non-convex functions \cite{li2018visualizing}. Hence, in order to prove that a neural network satisfies the conditions to be a barrier function we rely on local upper and lower approximations of the \gls{nn}, also known as relaxations. 
The simplest type of relaxation is interval bounds where given a hyperrectangular input set, the bounds are an interval that contains all the outputs of the \gls{nn} for the points in the input set \cite{wicker2020probabilistic}.
\begin{definition}[Interval relaxation]\label{def:interval_relaxation}
    An interval relaxation of a continuous function $f : \mathbb{R}^n \to \mathbb{R}^m$ over a set $X\subseteq \mathbb{R}^n$ are two vectors\footnote{The two symbols $\bot,\top$ are called bottom and top respectively.} $b^{\bot}, b^{\top} \in \mathbb{R}^m$ such that
    $
        b^{\bot} \leq f(x) \leq b^{\top}, \quad \forall x \in X.
   $
\end{definition}
An alternative relaxation that often produce tighter bounds are linear relaxations.
\begin{definition}[Linear relaxation]\label{def:linear_relaxation}
    A linear relaxation of a continuous function $f: \mathbb{R}^n \to \mathbb{R}^m$ over a set $X\subseteq \mathbb{R}^n$ are two linear functions $A^{\bot}x + b^{\bot}$ and $A^{\top}x + b^{\top}$ with $A^{\bot}, A^{\top} \in \mathbb{R}^{m\times n}$ and $b^{\bot}, b^{\top} \in \mathbb{R}^m$ such that
    $
        A^{\bot}x + b^{\bot} \leq f(x) \leq A^{\top}x + b^{\top}, \quad \forall x \in X.
    $
\end{definition}
If a linear relaxation of a function $f$ is defined over a hyperrectangular set $X$, then it is possible to find an interval relaxation from the linear relaxation \cite{zhang2020towards}.
Let $X = \{x \in \mathbb{R}^n \mid x^\bot \leq x \leq x^\top\}$ be a hyperrectangle and $A^{\bot}x + b^{\bot} \leq f(x) \leq A^{\top}x + b^{\top}$ denote a linear relaxation. Then an interval relaxation $b^{\bot}_{interval}, b^{\top}_{interval}$ can be computed as
\begin{align}
    b^{\bot}_{interval} &= A^{\bot}\left(\frac{x^\top + x^\bot}{2}\right) - |A^{\bot}|\left(\frac{x^\top - X^\bot}{2}\right) + b^{\bot}\label{eq:linear2interval_lower}\\
    b^{\top}_{interval} &= A^{\top}\left(\frac{x^\top + x^\bot}{2}\right) + |A^{\top}|\left(\frac{x^\top - x^\bot}{2}\right) + b^{\top}\label{eq:linear2interval_upper}
\end{align}
In order to find a linear relaxation of a neural network with general activation functions (assumed to be continuous) we employ CROWN \cite{zhang2018efficient}, where linear lower and upper bounds are propagated backwards through the neural network architecture. 
We note that if a \gls{nn} is composed with a continuous function, then CROWN-like techniques can still be applied on the composite computation graph to derive linear relaxations of the composed function. In particular, we can treat the continuous function as the first layer of neural network and perform backwards bound propagation \cite{xu2020}. 




\section{Probabilistic Safety via Neural Barrier Functions}\label{sec:safe_nbf}

Our framework to compute probabilistic safety for System \eqref{eq:system_equation} is based on stochastic barrier functions, which we parameterize as neural networks. Since this paper only focuses on stochastic barrier functions, we sometimes refer  to them as just barrier functions. In what follows, we first introduce stochastic barrier functions (Section \ref{sec:dt_sbf}) and then show in Section \ref{sec:method_verify} how to verify that a neural network is a barrier function for System \eqref{eq:system_equation} using the relaxation methods introduced in Section \ref{sec:prelim_crown}.
Section \ref{sec:method_verify_bab} improves the verification by introducing a branch-and-bound partitioning method to find tighter bounds for $P_{safe}$, while keeping under control the required computational cost.
Finally, in Section \ref{sec:method_train} we show how techniques commonly used for robust training of NNs can be used to train neural barrier functions (NBFs) for System \eqref{eq:system_equation}. 

\subsection{Stochastic barrier functions}\label{sec:dt_sbf}


Similar to Lyapunov functions for proving stability \cite{steinhardt2012finite}, the main idea of stochastic barrier function is to study the time properties of a system without the need to compute its flow explicitly. In particular, stochastic barrier functions rely on the theory of  $c$-martingales\footnote{In the rest of the paper we will use $\beta$ instead of $c$ as is custom for stochastic barrier functions.} to show that a stochastic process does not exit a given safe set with high probability.

\begin{definition}[Stochastic Barrier Function]\label{def:sbf}
Let $X_s \subseteq X,$ $X_0\subseteq X_s$ and $X_u=X\setminus X_s$ be respectively safe set, set of initial states, and unsafe set. Then, we say that a non-negative continuous almost everywhere function $B:\mathbb{R}^n \to \mathbb{R}_{\geq 0}$ is a stochastic barrier function for a stochastic discrete-time system $\tilde{\mathbf{x}}[k]$ as defined in Section \ref{sec:ProbForm} if there exists $ \beta \geq 0$ and $\gamma \geq 0$ such that

\begin{subequations}
    \begin{minipage}{0.365\textwidth}
        \vspace{-2mm}
        \begin{align}
B(x) \geq 0 \qquad &\forall x\in X\label{eq:barrier_ss}\\
    B(x) \geq 1 \qquad &\forall x\in X_u\label{eq:barrier_unsafe}
        \end{align}
    \end{minipage}
    \qquad
    \begin{minipage}{0.58\textwidth}
        \vspace{-2mm}
        \begin{align}
            B(x) \leq \gamma \qquad &\forall x\in X_0\label{eq:barrier_initial}\\
    \mathbb{E}[B(F(x) +  \mathbf{v})] \leq  B(x) + \beta \qquad & \forall x\in X_s \label{eq:barrier_expectation}.
        \end{align}
    \end{minipage}
\end{subequations}
\end{definition}
\begin{proposition}{(\cite{steinhardt2012finite})}\label{prop:probability_of_failure}
Let $B$ be a barrier function for $\tilde{\mathbf{x}}[k]$ and $H\in \mathbb{N}$ be a time horizon. Then, for $\varepsilon = \gamma + \beta \cdot H$ it holds that $
        P_{safe}(X_s,X_0,H) \geq 1 - \varepsilon.$
\end{proposition}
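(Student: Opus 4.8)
The plan is to exploit condition \eqref{eq:barrier_expectation}, which makes $B$ composed with the dynamics a supermartingale up to the additive drift $\beta$ as long as the state stays in $X_s$, and to feed this into a Ville/Doob-type maximal inequality for non-negative supermartingales. Fix an arbitrary $x_0 \in X_0$; since $P_{safe}$ is an infimum over $X_0$, it suffices to show $P(\forall k \in [0,H],\, \tilde{\mathbf{x}}[k]\in X_s \mid x[0]=x_0) \geq 1-\varepsilon$ for each such $x_0$, so I would instead bound the probability of the complementary event that the process enters the unsafe set $X_u = X\setminus X_s$ at some time $k\leq H$. Two endpoint facts from the barrier conditions anchor the argument: by \eqref{eq:barrier_initial} the process starts low, $B(x_0)\leq\gamma$, and by \eqref{eq:barrier_unsafe} it is large on the unsafe set, $B\geq 1$ on $X_u$.

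The core step is a supermartingale construction. Introduce the stopping time $\tau = \min\{k\geq 0 : \tilde{\mathbf{x}}[k]\in X_u\}$ and the drift-corrected stopped process $Z[k] = B(\tilde{\mathbf{x}}[k\wedge\tau]) - \beta\,(k\wedge\tau)$. I would show $Z[k]$ is a supermartingale with respect to the natural filtration $\mathcal{F}_k$. On the event $\{\tau\leq k\}$ the process is frozen, so $Z[k+1]=Z[k]$. On $\{\tau>k\}$ the state satisfies $\tilde{\mathbf{x}}[k]\in X_s\subseteq X$; here the stopped process still evolves according to $\tilde{\mathbf{x}}[k+1]=F(\tilde{\mathbf{x}}[k])+\mathbf{v}[k]$, so condition \eqref{eq:barrier_expectation} applies directly and gives $\mathbb{E}[B(\tilde{\mathbf{x}}[k+1])\mid\mathcal{F}_k]\leq B(\tilde{\mathbf{x}}[k])+\beta$. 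Subtracting the drift term yields $\mathbb{E}[Z[k+1]\mid\mathcal{F}_k]\leq Z[k]$ in both cases. Telescoping the resulting inequalities gives $\mathbb{E}[Z[H]]\leq\mathbb{E}[Z[0]]=B(x_0)\leq\gamma$.

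It then remains to turn this into a probability. Since $0 \le k\wedge\tau\leq H$, the drift correction contributes at most $\beta H$, so $\mathbb{E}[B(\tilde{\mathbf{x}}[H\wedge\tau])] \leq \mathbb{E}[Z[H]] + \beta H \leq \gamma + \beta H = \varepsilon$. Because $B\geq 0$ everywhere by \eqref{eq:barrier_ss} and $B(\tilde{\mathbf{x}}[\tau])\geq 1$ on $\{\tau\leq H\}$ (the state then lies in $X_u$, so \eqref{eq:barrier_unsafe} applies), a one-line Markov-type estimate gives $P(\tau\leq H) \leq \mathbb{E}[B(\tilde{\mathbf{x}}[H\wedge\tau])] \leq \varepsilon$. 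The event $\{\tau\leq H\}$ is precisely the event that $\tilde{\mathbf{x}}$ leaves $X_s$ within the horizon, so its complement has probability at least $1-\varepsilon$; taking the infimum over $x_0\in X_0$ delivers $P_{safe}(X_s,X_0,H)\geq 1-\varepsilon$.

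The main obstacle I anticipate is the measure-theoretic bookkeeping in the supermartingale step: one must check carefully that condition \eqref{eq:barrier_expectation}, stated pointwise for $x\in X_s$, translates into the conditional-expectation inequality for the \emph{stopped} process. This relies on the observation, implicit in the definition of $\tilde{\mathbf{x}}$, that from any state in $X\supseteq X_s$ the one-step transition is still $F(x)+\mathbf{v}$, so no discrepancy is introduced before the exit time. A related care point is the optional-stopping/maximal-inequality argument and the identification of the exit location with $X_u$, which is what guarantees $B(\tilde{\mathbf{x}}[\tau])\geq 1$ and hence closes the bound.
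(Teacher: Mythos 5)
The paper never proves this proposition itself --- it is imported from \cite{steinhardt2012finite} --- so your proposal can only be compared against the standard martingale argument underlying that citation, and your route is indeed that standard one: stop the process, correct for the drift $\beta$, derive a supermartingale property from Condition \ref{eq:barrier_expectation}, then convert $\mathbb{E}[B(\tilde{\mathbf{x}}[H\wedge\tau])]\le\gamma+\beta H$ into a probability bound via a Markov-type inequality using Conditions \ref{eq:barrier_ss} and \ref{eq:barrier_unsafe}. The drift-correction, optional-stopping and Markov steps are all correct as written.

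The genuine gap sits exactly at the point you flagged as a ``care point'' and then asserted away: the identification of your stopping time $\tau$ (first entry into $X_u=X\setminus X_s$) with the unsafe event. Under the paper's definitions, $\tilde{\mathbf{x}}$ is stopped at the first exit from $X$, not from $X_s$, and with noise of unbounded support the process can jump from $X_s$ to a point outside $X$ without ever visiting $X_u$. This breaks your argument in two places. First, on $\{\tau>k\}$ you claim $\tilde{\mathbf{x}}[k]\in X_s$; in fact it may be frozen at a point outside $X$. That part is repairable: in the frozen case $B(\tilde{\mathbf{x}}[k+1])=B(\tilde{\mathbf{x}}[k])$ while the drift term grows, so $Z[k+1]\le Z[k]$ still holds, but the case must be included. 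Second, and not repairable as written: $\{\tau\le H\}$ is \emph{not} the event that $\tilde{\mathbf{x}}$ leaves $X_s$ within the horizon, since trajectories exiting $X$ directly are unsafe for Problem 1 (the frozen point lies outside $X\supseteq X_s$) yet have $\tau=\infty$, and on those trajectories your Markov step yields nothing because Definition \ref{def:sbf} only forces $B\ge 1$ on $X\setminus X_s$ and merely $B\ge 0$ elsewhere. What your argument actually establishes is the reach-avoid bound $P(\exists k\le H:\ \tilde{\mathbf{x}}[k]\in X_u)\le\varepsilon$, not $1-P_{safe}\le \varepsilon$. The distinction is not cosmetic: with the hypotheses literally as stated the proposition can fail --- take $X_s=X$, $X_u=\emptyset$, $F$ constant, Gaussian noise, and $B\equiv 0$, so that all four conditions hold with $\gamma=\beta=0$, yet the process leaves $X_s$ in one step with positive probability. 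To close your proof one must either strengthen Condition \ref{eq:barrier_unsafe} to $B(x)\ge 1$ for all $x\in\mathbb{R}^n\setminus X_s$ and redefine $\tau$ as the first exit time from $X_s$ (your argument then goes through verbatim), or assume the noise support satisfies $F(X_s)+V\subseteq X$. This mismatch is arguably inherited from the paper's own formulation (which even sets $B_\theta=0$ outside $X$), but as written your proposal does not prove the inequality that is claimed.
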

\begin{figure}
    \centering
    \includegraphics[width=0.8\textwidth]{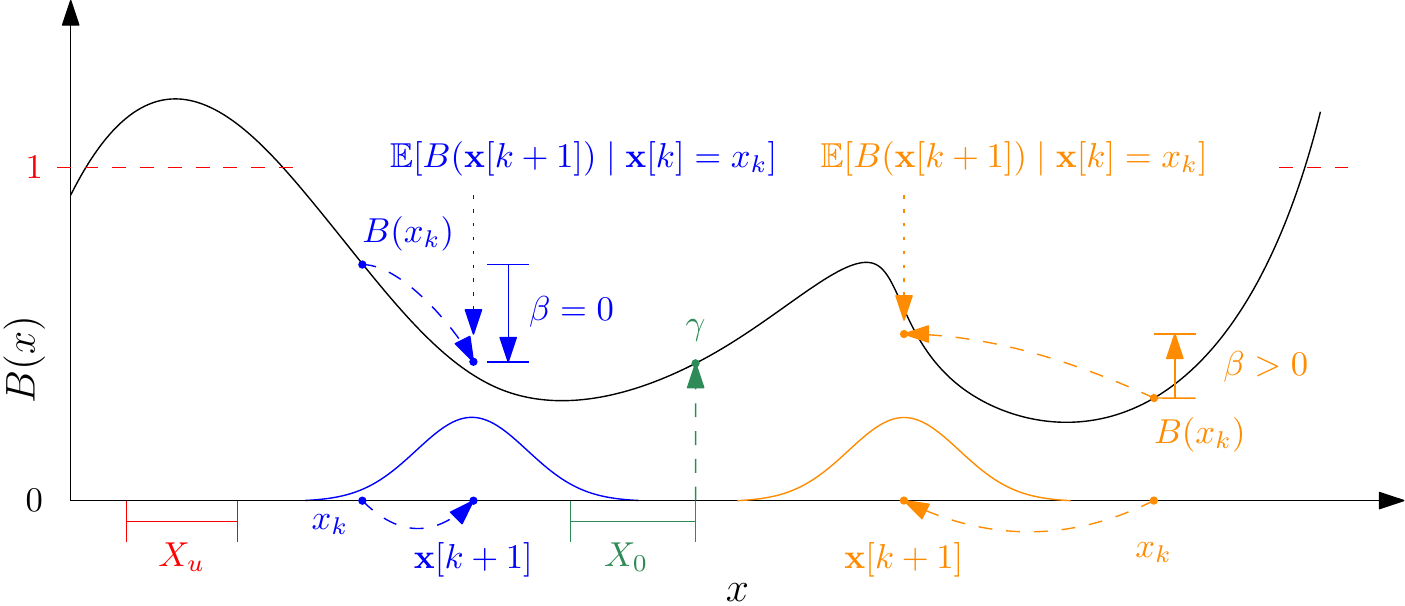}
    \caption{A  barrier function $B$ is a non-negative function that is greater than $1$ in the unsafe region. $\beta$ is an upper bound on the expected increase in value of the barrier function when composed with System \eqref{eq:system_equation} after one time step for any starting state $x_k$. If the the expected value is decreasing $\beta$ is taken to be zero. 
    $\gamma$ is an upper bound of $B(x)$ for $x\in X_0$. }
    \label{fig:example_barrier}
\end{figure}
An intuition behind Conditions \ref{eq:barrier_ss}-\ref{eq:barrier_expectation} is given in Fig. \ref{fig:polynomial_barrier}. Intuitively, these conditions guarantee that the expectation of the composition of $B$ with the one step dynamics of $\tilde{\mathbf{x}}$ does not grow by more than $\beta$ in $X_s$, i.e. it forms a $\beta-$martingale. This allows us to use non-negative martingale inequalities to compute  $P_{safe}(X_s,X_0,H)$ \cite{SANTOYO2021109439}. Critically, these are static conditions that do not require to evolve $\tilde{\mathbf{x}}[k]$ to study its behavior over time.

\subsection{Neural Stochastic Barrier Functions}\label{sec:method_verify}
Given a feed-forward \gls{nn} $B_\theta$ with arbitrarily many layers and  continuous activation functions
, where $\theta$ represents the vector of the parameters (weights and biases), we want to verify if $B_\theta$ is a valid stochastic barrier function for System \eqref{eq:system_equation}, i.e. if it satisfies Conditions \ref{eq:barrier_ss}-\ref{eq:barrier_expectation}, thus forming a neural barrier function (NBF). Our approach is based on employing the local relaxation techniques introduced in Section \ref{sec:prelim_crown} to build piece-wise linear functions that under- and over-approximate $B_\theta$. In particular, we partition $X$ to a finite set of regions $Q=\{q_1,...,q_{|Q|}\}$ and, using the techniques introduced in Section \ref{sec:prelim_crown}, for each $q\in Q$ we can find matrices $A^{\bot}_q, A^{\top}_q \in \mathbb{R}^{1\times n}$ and $b^{\bot}_q, b^{\top}_q \in \mathbb{R}$ such that
    \[
        A^{\bot}_qx + b^{\bot}_q \leq B_{\theta}(x) \leq A^{\top}_qx + b^{\top}_q, \quad \forall x \in q.
    \]
Then, the following lemma follows trivially.
\begin{lemma}
\label{lemma:Conditions}
Let $Q_{X_u}\subseteq Q$ and $Q_{X_0}\subseteq Q$ be such that $X_u \subseteq \cup_{q\in Q_{X_u}}q$ and $ X_0 \subseteq \cup_{q\in Q_{X_0}} q  $. Choose
$\gamma=\max_{q\in Q_{X_0}}\max_{x\in q} A^{\top}_qx + b^{\top}_q $. Then, if 
\begin{align}
   & \min_{q\in Q}\min_{x\in q} A^{\bot}_qx + b^{\bot}_q \geq 0 \qquad 
   & \min_{q\in Q_{X_u}}\min_{x\in q} A^{\bot}_qx + b^{\bot}_q \geq 1,
\end{align}
Conditions \ref{eq:barrier_ss}-\ref{eq:barrier_initial} are satisfied.
\end{lemma}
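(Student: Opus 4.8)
The plan is to verify each of Conditions \eqref{eq:barrier_ss}, \eqref{eq:barrier_unsafe}, and \eqref{eq:barrier_initial} directly by exploiting the two facts already established before the lemma: that for each region $q\in Q$ the linear functions sandwich the network as $A^{\bot}_qx + b^{\bot}_q \leq B_\theta(x) \leq A^{\top}_qx + b^{\top}_q$ on $q$, and that $Q$ is a partition of $X$ covering the relevant sets. Since each condition concerns a global inequality on $B_\theta$ over a set, the strategy is to reduce that global inequality to a finite collection of inequalities over the linear relaxations, one per region, which is exactly what the displayed hypotheses encode.

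First I would prove Condition \eqref{eq:barrier_ss}. Fix any $x\in X$. Because $Q$ partitions $X$, there is some $q\in Q$ with $x\in q$, and on $q$ the lower relaxation gives $B_\theta(x) \geq A^{\bot}_qx + b^{\bot}_q$. The first displayed hypothesis, $\min_{q\in Q}\min_{x\in q} A^{\bot}_qx + b^{\bot}_q \geq 0$, says precisely that the linear lower bound is nonnegative everywhere on every region, so $A^{\bot}_qx + b^{\bot}_q \geq 0$, and chaining the two inequalities yields $B_\theta(x)\geq 0$. Since $x\in X$ was arbitrary, Condition \eqref{eq:barrier_ss} holds. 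Next, for Condition \eqref{eq:barrier_unsafe}, I take any $x\in X_u$; by the covering assumption $X_u \subseteq \cup_{q\in Q_{X_u}}q$, so $x$ lies in some $q\in Q_{X_u}$, and the same lower-relaxation bound together with the second displayed hypothesis $\min_{q\in Q_{X_u}}\min_{x\in q} A^{\bot}_qx + b^{\bot}_q \geq 1$ gives $B_\theta(x)\geq A^{\bot}_qx + b^{\bot}_q \geq 1$, establishing \eqref{eq:barrier_unsafe}.

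Finally, for Condition \eqref{eq:barrier_initial} I would use the upper relaxation together with the choice of $\gamma$. For any $x\in X_0 \subseteq \cup_{q\in Q_{X_0}}q$, pick $q\in Q_{X_0}$ containing $x$; the upper relaxation gives $B_\theta(x)\leq A^{\top}_qx + b^{\top}_q \leq \max_{x'\in q}A^{\top}_qx' + b^{\top}_q \leq \max_{q'\in Q_{X_0}}\max_{x'\in q'}A^{\top}_{q'}x' + b^{\top}_{q'} = \gamma$, so \eqref{eq:barrier_initial} holds by the very definition of $\gamma$. The only genuinely nontrivial ingredient is the claim that the inner optima over each hyperrectangular region $q$ — the minima of the affine lower bounds and the maxima of the affine upper bounds — are well defined and computable; but these are optimizations of a linear objective over a box, which are attained at vertices and can be written in closed form (indeed the interval-relaxation formulas \eqref{eq:linear2interval_lower}--\eqref{eq:linear2interval_upper} give exactly these vertex values). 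This is why the statement asserts the result follows trivially: each step is a one-line application of the relaxation inequality plus the corresponding hypothesis, and the main (mild) obstacle is just spelling out that the region-wise linear programs are solvable, which the box structure guarantees.
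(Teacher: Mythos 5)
Your proof is correct and takes essentially the same route as the paper's: both verify Conditions \eqref{eq:barrier_ss}--\eqref{eq:barrier_initial} one at a time by combining the local linear relaxation on each region with the partition/covering assumptions, your pointwise phrasing (fix $x$, locate its region, chain the inequalities) being just a transparent restatement of the paper's formulation in terms of minima and maxima over regions. Your closing remark on attainment of the box-constrained linear optima is a harmless addition not needed for the implication itself.
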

Note that under the assumption that $q$ is a convex polytope, which can always be enforced by the partition strategy, then Lemma \ref{lemma:Conditions} reduces to the solution of linear programs. 
Furthermore, as discussed in Section \ref{sec:prelim_crown}, we remark that if interval relaxation techniques are employed, then $\forall q \in Q$ $A^{\bot}_q=A^{\top}_q=0^{1\times n}$. As a consequence, checking Conditions \ref{eq:barrier_ss}-\ref{eq:barrier_initial} reduces to simply the evaluation of vectors $b^{\bot}_q,b^{\top}_q$ at the price of possibly more conservative bounds. 

We now turn our attention to $\beta$, and consequently to the computation of the martingale condition (Condition \ref{eq:barrier_expectation}).
Unfortunately, due to the non-linearity of the functions involved, for $x\in X$ $\mathbb{E}[B_\theta(F(x) + \mathbf{v}) ]$ is analytically intractable. As a consequence, we again rely on computing local under- and over-approximations. In particular, consider finite partitions $Q$ and $Q_V$  respectively of the state space $X$ and of the uncertainty space $V$, and let $\tilde{Q}=Q \times Q_v$. Then, as discussed in Section \ref{sec:prelim_crown} for each $q\in Q$ and $\tilde{q} = ({q}_x, {q}_v)\in \tilde{Q}$ we can find row vectors $A^{\bot}_{q},A^{\bot}_{{q}_x}, A^{\bot}_{{q}_v}, A^{\top}_{q}, A^{\top}_{{q}_x}, A^{\top}_{{q}_v} \in \mathbb{R}^{1\times n}$ and scalars $b^{\bot}_{{q}}, b^{\bot}_{\tilde{q}}, b^{\top}_{{q}}$, $ b^{\top}_{\tilde{q}} \in \mathbb{R}$ such that
\begin{align}
\label{eq:linearizationState}
    \forall x \in q,& \qquad  A^{\bot}_qx + b^{\bot}_q \leq B_{\theta}(x) \leq A^{\top}_qx + b^{\top}_q\\
    \forall (x',v') \in \tilde{q},& \qquad A^{\bot}_{{q}_x}x' + A^{\bot}_{{q}_v}v' + b^{\bot}_{\tilde{q}} \leq B_{\theta}(F(x')+v')  \leq  A^{\top}_{{q}_x}x'  + A^{\top}_{{q}_v}v'  + b^{\top}_{\tilde{q}}.
    \label{eq:linearizationNoise}
\end{align}
The following theorem uses the above relaxations to bound $\mathbb{E}[B(F(x) +  \mathbf{v})] $ and consequently find a lower bound on $\beta$.

\begin{theorem}
\label{th:main-Theorem}
    Let ${Q}$ and $Q_{V}$ respectively be partitions of $X$ and $V$. Let ${Q}_{X_s}\subseteq Q$ be such that $\cup_{q\in {Q}_{X_s}}q \subseteq X_s$. For $\tilde{q}=({q}_x,{q}_v)\in Q\times Q_{V}$ define 
    \begin{align*}
        A_{({q}_x,{q}_v)}=A^{\top}_{{q}_x} \int_{{q}_v}p(v)\,dv, \qquad
        b_{({q}_x,{q}_v)}=b^{\top}_{\tilde{q}} \int_{{q}_v}p(v)\,dv + A^{\top}_{{q}_v}\int_{{q}_v} v p(v)\,dv ,
    \end{align*}
 and assume
    \begin{align}
    \label{Eqn:Beta}
    \beta \geq \max_{q \in Q_{X_s}}  \max_{x \in q}\left( \big( \sum_{q_v \in Q_{V}} A_{(q,q_v)} - A^{\bot}_q \big) x +\big(\sum_{q_v \in Q_{V}}b_{(q,q_v)} - b^{\bot}_q\big) \right).
    \end{align}
    Then, for any  $x\in X_s$ it holds that $\mathbb{E}[B_\theta(F(x) + \mathbf{v}) ]-B_\theta(x) \leq \beta.$
\end{theorem}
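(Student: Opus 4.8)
The plan is to sandwich the one-step increment between two affine functions of $x$ on each cell: upper bound the expectation $\mathbb{E}[B_\theta(F(x)+\mathbf{v})]$ cell-by-cell using the upper linear relaxation \eqref{eq:linearizationNoise}, lower bound $B_\theta(x)$ using the lower relaxation \eqref{eq:linearizationState}, and then recognize the resulting affine difference as exactly the expression maximized in \eqref{Eqn:Beta}. All the work is in verifying that integrating the noise-dependent relaxation against $p(v)$ reproduces precisely the coefficients $A_{(q,q_v)}$ and $b_{(q,q_v)}$ from the statement.

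First I would fix $x \in X_s$ and take the cell $q \in Q_{X_s}$ containing it. Since $Q_V$ partitions $V$ and $\int_V p(v)\,dv = 1$, I split the expectation into a sum of cell integrals,
\[
  \mathbb{E}[B_\theta(F(x)+\mathbf{v})] = \sum_{q_v \in Q_V} \int_{q_v} B_\theta(F(x)+v)\,p(v)\,dv .
\]
On each cell, for every $v \in q_v$ the pair $(x,v)$ lies in $\tilde{q} = (q,q_v)$, so the upper relaxation \eqref{eq:linearizationNoise}, with $q_x = q$, gives $B_\theta(F(x)+v) \le A^{\top}_{q}x + A^{\top}_{q_v}v + b^{\top}_{\tilde{q}}$. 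Integrating this affine-in-$v$ bound against $p(v)$ over $q_v$ is the key algebraic step: the state term factors as $A^{\top}_q x$ times the scalar mass $\int_{q_v} p(v)\,dv$, giving $A_{(q,q_v)}x$, while the noise term integrates to the (state-independent) moment $A^{\top}_{q_v}\int_{q_v} v\,p(v)\,dv$, which together with $b^{\top}_{\tilde{q}}\int_{q_v}p(v)\,dv$ is exactly $b_{(q,q_v)}$. Hence $\int_{q_v} B_\theta(F(x)+v)\,p(v)\,dv \le A_{(q,q_v)}x + b_{(q,q_v)}$.

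Summing over $q_v \in Q_V$ yields $\mathbb{E}[B_\theta(F(x)+\mathbf{v})] \le \big(\sum_{q_v} A_{(q,q_v)}\big)x + \sum_{q_v} b_{(q,q_v)}$, and subtracting the lower bound $B_\theta(x) \ge A^{\bot}_q x + b^{\bot}_q$ from \eqref{eq:linearizationState} gives
\[
  \mathbb{E}[B_\theta(F(x)+\mathbf{v})] - B_\theta(x) \le \Big(\textstyle\sum_{q_v} A_{(q,q_v)} - A^{\bot}_q\Big)x + \Big(\textstyle\sum_{q_v} b_{(q,q_v)} - b^{\bot}_q\Big).
\]
The right-hand side is precisely the argument of the double maximum in \eqref{Eqn:Beta}, so bounding $x$ by $\max_{x\in q}$, then $q$ by $\max_{q\in Q_{X_s}}$, and invoking the hypothesis $\beta \ge (\cdots)$ closes the argument.

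The care points I would watch are the following. First, the splitting and integration implicitly require that $Q_V$ genuinely partitions $V$ and that the moments $\int_{q_v} v\,p(v)\,dv$ converge, which is the delicate part when $p(v)$ has unbounded support. Second, there is bookkeeping in the product cell $\tilde{q}=(q,q_v)$: the state index $q_x$ must coincide with $q$ so that a single state-relaxation $A^{\top}_q$ is reused across all noise cells and can be pulled out of the sum. Finally, to pass from the per-cell bound to the claim \emph{for all} $x \in X_s$ one needs $Q_{X_s}$ to cover $X_s$ (i.e.\ $X_s = \cup_{q\in Q_{X_s}} q$, so $X_s$ is a union of cells), which is the assumption enforced by the partitioning strategy and is the only nontrivial structural requirement beyond the relaxation inequalities.
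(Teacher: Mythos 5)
Your proof is correct and follows essentially the same route as the paper's: decompose the expectation over the noise partition, integrate the upper linear relaxation of $B_\theta(F(x)+v)$ cell by cell (using linearity and the additive noise structure to recover exactly $A_{(q,q_v)}$ and $b_{(q,q_v)}$), subtract the lower relaxation of $B_\theta(x)$, and take the double maximum. Your final care point is well spotted: the theorem states the hypothesis as $\cup_{q\in Q_{X_s}}q \subseteq X_s$, but the step $\max_{x\in X_s}(\cdots) \leq \max_{q\in Q_{X_s}}\max_{x\in q}(\cdots)$ --- used both in your argument and in the paper's proof --- actually requires the covering direction $X_s \subseteq \cup_{q\in Q_{X_s}}q$, which holds in the intended setting where $X_s$ is a union of cells of $Q$.
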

The proof of Theorem \ref{th:main-Theorem} is reported in the Appendix  and relies on the under and over approximations introduced in Eqns \eqref{eq:linearizationState} and \eqref{eq:linearizationNoise}. In particular, by relying on the additive nature the noise and on the  linearity of the expectation, we can compute exactly how these linear functions are propagated through the expectation.

The computation of $A_{({q}_x,{q}_v)}$ and $b_{({q}_x,{q}_v)}$ in Theorem \ref{th:main-Theorem} requires the evaluation of integrals $\int_{q_v}p(v)\,dv$ and $\int_{q_v} v p(v)\,dv$, which are the probability of the noise being in $q_v$ 
and the partial expectation of the noise restricted to $q_v$ respectively. 
For various classes of distributions, such as Gaussian with diagonal covariance matrix, uniform, or finite support distributions, these integrals can be computed in closed forms. 
Otherwise, numerical approximations may be required.
One additional challenge is that if the noise has unbounded support as is the case with Gaussian noise, then some $q_v$ are infinite in size. 
For these partitions, linear relaxations may not exist, hence we cannot compute $A_{({q}_x,{q}_v)}$ and $b_{({q}_x,{q}_v)}$.
However, this problem can be solved by noticing that $\tilde{\mathbf{x}}$ is a stopped process outside $X$, which allows one to set $B_{\theta}(x)=0$ for all $x\not\in X$. With this assumption $B_{\theta}$ is still continuous almost everywhere (assuming that the boundary of $X$ has measure zero). This guarantees the conditions of Proposition \ref{prop:probability_of_failure} are satisfied.
Furthermore, since $X$ is bounded, such an assumption can simplify the partitioning as we can find the bounded subset of $V$ such that $B_{\theta}(F(x) + v)\neq 0$. 
In particular, for any $x\in X$, $B_\theta(F(x) + v)\neq 0$ only for $v\in V' = \{v \mid x^\bot - x^\top \leq v \leq x^\top - x^\bot\}$ where $x^\bot, x^\top \in \mathbb{R}^n$ are two vectors such that $X = \{x \mid x^\bot \leq x \leq x^\top\}$.

\begin{remark}
We remark that the results of this Section, and Theorem \ref{th:main-Theorem} in particular, can also be applied to systems with non-additive noise at the price of increased conservativeness. Specifically, given 
$    \mathbf{x}[k+1] = F(\mathbf{x}[k], \mathbf{v}[k])$ for $F:\mathbb{R}^n \times V \to \mathbb{R}^n$ continuous in both inputs, we can employ the linear relaxation techniques described in Section \ref{sec:prelim_crown} to find lower and upper bounds of the system dynamics that are linear in $x$ and $v$ locally to each partition. Then, the results in this Section can be applied.
\end{remark}
We stress that, similarly to Lemma \ref{lemma:Conditions}, Theorem \ref{th:main-Theorem} allows us to find $\beta$ by solving linear programs that reduces to evaluation of constants if interval relaxation techniques are employed. 

\subsubsection{A Branch and Bound Scheme for Verification}\label{sec:method_verify_bab}
\begin{figure}
    \centering
    \includegraphics[width=\linewidth]{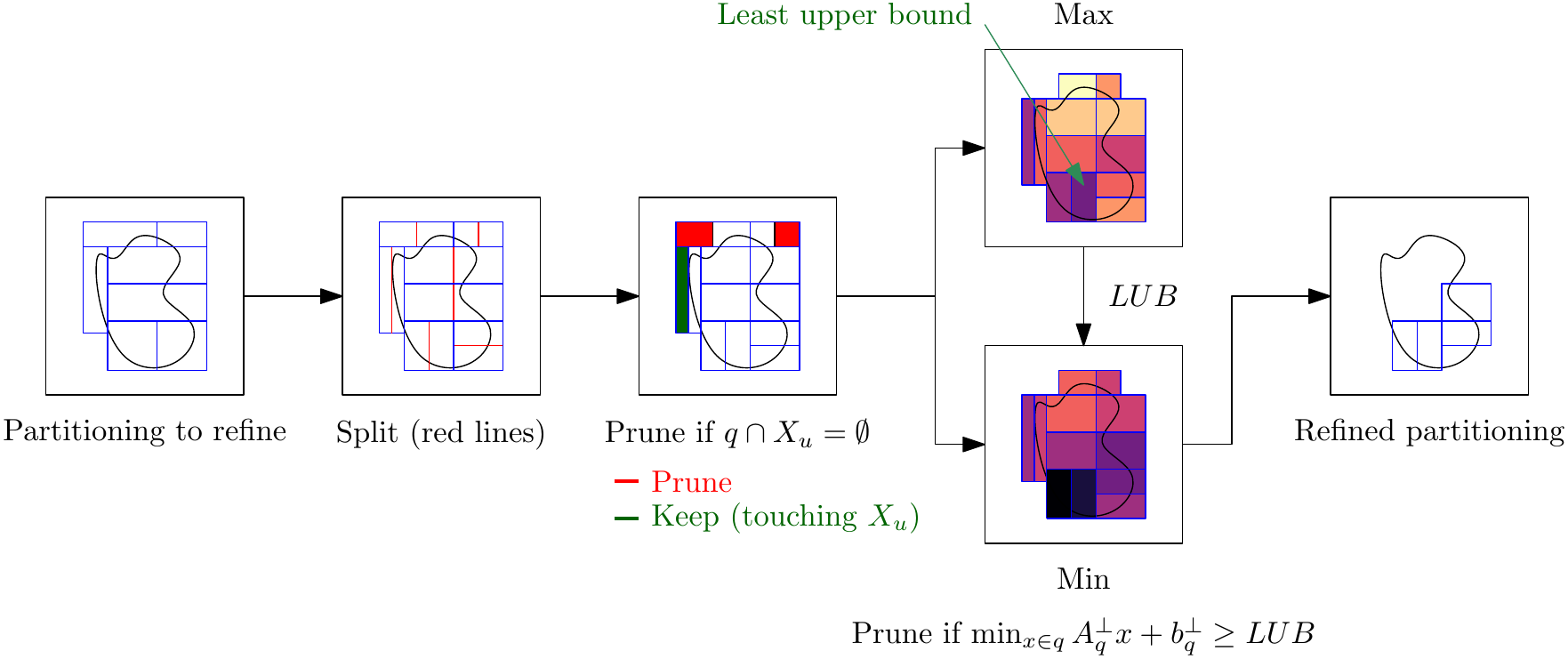}
    \caption{One iteration of the automatic partitioning scheme with splitting and pruning for Condition \ref{eq:barrier_unsafe}. The set $X_u$ is shown as a black blob, and hyperrectangles $q \in Q_{X_u}$ are split and pruned. 
    $LUB = \min_{q \in Q_{X_u}} \min_{x \in q} A^\top_q x + b^\top_q$ denotes the least upper bound.}
    \label{fig:automatic_partitioning}
\end{figure}
In order to guarantee scalability to our verification framework,  we develop a branch-and-bound partitioning scheme inspired by \cite{Bunel2018, xu2021fast} that starting from a coarse partitioning of $X$ gradually refines it by splitting regions and pruning those that already satisfy the barrier conditions.
 For convenience, we assume that all regions $q$ be hyperrectangles. 
 We perform the branch-and-bound independently for each of the conditions in Definition \ref{def:sbf} (Conditions \ref{eq:barrier_ss}-\ref{eq:barrier_expectation}).
 In what follows, we explain the partitioning scheme for Condition \ref{eq:barrier_unsafe}, the others follow analogously. 
 
 We start with a coarse initial partition $Q_{X_u}$ of $X_u$.  Then, as shown in Lemma \ref{lemma:Conditions}, for $X_u \subseteq \cup_{q\in Q_{X_u}} q $ Condition \ref{eq:barrier_unsafe} reduces to check if $\min_{q \in Q_{X_u}} \min_{x\in q} A^\bot_{q} x + b^\bot_{q} \geq 1.$
 As we start with a coarse partition initially our bounds may be very conservative.  Consequently, we gradually refine $Q_{X_u}$.
 First of all, we identify which regions to prune and which to split. This is decided based on the error introduced by the linear bounds in each partition. Specifically,  at each iteration we split all regions in $Q_{X_u}$, whereas we prune region $q$ if either $q \cap X_u = \emptyset$ or $\min_{x\in q} A^\bot_q x + b^\bot_q \geq \min_{q\in Q_{X_u}} \min_{x\in q} A^\top_q x + b^\top_q$. In fact, if the minimum value of $B$ in $q$ is greater than the smallest upper bound in another region $q' \in Q_{X_u}$, then $q$ does not influence the satisfaction of Condition \ref{eq:barrier_unsafe} and can be discarded.
 
 One iteration of splitting and pruning is shown in Fig. \ref{fig:automatic_partitioning}. For each region $q$ we only split in half the dimension $d$ that introduces the highest source of uncertainty, that is, we split dimension $d$ at the midpoint such that $
    d =\argmax_{1\leq i \leq n} \left((|A_{q}^{\bot}| + |A_{q}^{\top}|)^T \odot (q^{\top} - q^{\bot})\right)_i$
 where $\odot$ is the elementwise product and $q^{\bot}, q^{\top}$ denote the lower and upper bounds of $q$ and $(\cdot)_i$ represents the $i$\textsuperscript{th} component of a vector. Then, we prune regions that do not influence the final result of the minimization problem according to the conditions described above. 
  Finally, we stop the partitioning when the barrier condition is satisfied, i.e.,  $\min_{q\in Q_{X_u}}\min_{x\in q} A^\bot_q x + b^\bot_q \geq 1$, or the gap between upper and lower bound for $\min_{x\in X_u} B_\theta(x)$ is less than a threshold $t_{gap}>0$, that is if
  $   \min_{q\in Q_{X_u}} \min_{x\in q} A^\top_q x + b^\top_q - \min_{q\in Q_{X_u}} \min_{x\in q} A^\bot_q x + b^\bot_q < t_{gap}.$

\subsection{Training Stochastic Neural Barrier Functions}\label{sec:method_train}
We now describe the neural network training procedure, which is the key piece to obtain a valid stochastic barrier function $B_{\theta}.$ As Conditions \ref{eq:barrier_ss}-\ref{eq:barrier_expectation} needs to hold over regions in the state space, the rationale behind our approach is to adapt techniques commonly employed in certified adversarial training of NN \cite{zhang2020towards,wicker2021bayesian} to our setting.  Our training procedure starts by sampling independently $m$ training points from each set $X,X_0,X_s,$ and $X_u$.  We denote each of the resulting training sets by $Q_{X_s}^{(m)}$, $Q_{X_u}^{(m)}$, $Q_{X_0}^{(m)}$, $Q_{X}^{(m)}$. Furthermore, we independently sample $l$ vectors $v_1, \ldots, v_l$ from the noise distribution $p(v).$ Then, for training parameters $0\leq \kappa \leq 1$ and $\epsilon>0,$ and a time horizon $H\in \mathbb{N}$, the robust training loss $\mathcal{L}_{\epsilon}$  is defined as follows
 \begin{align*}
   \mathcal{L}_{\epsilon} &= (1- \kappa)\mathcal{L}_{violation} + \kappa(\gamma^{(m)} + \beta^{(m)} \cdot H)\\
 \nonumber  \mathcal{L}_{violation} &=\\
   \frac{1}{2m}&\left(\sum_{x \in Q_{X}^{(m)}} \max(0, -\min_{x':||x-x' ||_{\infty}\leq \epsilon}B_\theta(x')) + \sum_{x \in Q_{X_u}^{(m)}} \max(0, 1 - \min_{x':||x-x' ||_{\infty}\leq \epsilon}B_\theta(x'))\right)\\
\gamma^{(m)} &= \max_{x \in Q_{X_0}^{(m)}}\max_{x':||x-x' ||_{\infty}\leq \epsilon}B_\theta(x')\\   
\beta^{(m)} &= \max_{x \in Q_{X_0}^{(m)}}\max_{x':||x-x' ||_{\infty}\leq \epsilon} \big(\frac{1}{l}\sum_{j=1}^l B_\theta(F(x') + v_j) - B_\theta(x')\big).
\end{align*}
Intuitively, $\mathcal{L}_{violation}$ penalizes parameters $\theta$ that violates Conditions \ref{eq:barrier_ss} and \ref{eq:barrier_unsafe}, while minimizing $\gamma^{(m)} + \beta^{(m)} \cdot H$  maximizes the safety probability according to Proposition \ref{prop:probability_of_failure}. Consequently, $\kappa$ weights between having a valid stochastic barrier function and achieving tight probability bounds. 
$\min$ and $\max$ of $B_\theta$ over an $\epsilon-$ball around each training point are computed similarly to in Lemma \ref{lemma:Conditions} and Theorem \ref{th:main-Theorem} by employing the linear and interval relaxation techniques introduced in Section \ref{sec:prelim_crown}.
This also explains the role of $\epsilon$: small values of $\epsilon$ guarantee tighter approximations of $B_{\theta}$, while for larger values we obtain potentially looser bounds but that cover a larger portion of the state space.  

\section{Experimental Evaluation}\label{sec:experiments}
We evaluate our framework on three benchmarks: a 2-D linear system taken from \cite{SANTOYO2021109439}, the 2-D polynomial system shown in Figure \ref{fig:polynomial_system_barrier} from \cite{Abate2021}, and a 3-D discrete-time Dubin's car model \cite{Abate2021}, which is a non-polynomial system.  
In order to show the flexibility of our framework, for all systems we consider the same neural barrier function architecture: a feed-forward neural network with 3 hidden layers, 128 neurons per hidden layer, and ReLU activation functions. For computing linear relaxations, we use CROWN-IBP \cite{zhang2020towards} during training and CROWN \cite{zhang2018efficient} for verification. We employ a batch size $m = 250$ and train the neural network for 150 epochs with 400 iterations per epoch.
To gradually switch from maximizing probability of safety to prioritizing a valid barrier, we start with $\kappa = 1.0$ and exponentially decay with multiplicative factor of $0.97$ for each epoch.  
We implemented  our method in Python. 
For the \gls{sos} comparision, we have reimplemented the algorithm in \cite{SANTOYO2021109439} in Julia (1.7.2) with SumOfSquares.jl (0.5.0), and use Mosek (9.3.11).
Experiments are conducted on an Intel i7 6700K CPU with 16GB RAM and Nvidia GTX1060 GPU with 6GB VRAM. Further details can be found in the Supplementary Material including an analysis on the effect of $\epsilon$
\footnote{Code for both \gls{nbf} and \gls{sos} is available under GNU GPLv3 license at \url{https://github.com/DAI-Lab-HERALD/neural-barrier-functions}.}.

\paragraph{Certification Results}
\begin{wraptable}{r}{0.6\textwidth}
    \centering
    \caption{Certified lower bound for $P_{safe}$. Higher is better, and the best result for each system is highlighted in \textbf{bold}. \Gls{nbf} stands for Neural Barrier Function (our approach), while \gls{sos} is sum-of-square optimization. Cells with ''-'' denotes that \gls{sos} failed to compute a barrier.}
    \label{tab:certification_results}
    {\small \begin{tabular}{lrrr}
        \toprule
         & Linear & 2-D polynomial & Dubin's car \\
         Method & & & \\
         \midrule
         \gls{sos} (4) & 0.690906 & 0.000000 & - \\
         \gls{sos} (8) & 0.975079 & 0.232710 & - \\
         \gls{sos} (13) & 0.998405 & 0.681383 & - \\
         \gls{sos} (15) & 0.999761 & - & - \\
         NBF & \textbf{0.999969} & \textbf{0.991664} & \textbf{0.870272} \\
         \bottomrule
    \end{tabular}}
\end{wraptable}

To illustrate the efficacy of our framework, in Table \ref{tab:certification_results} we compare the lower bound of $P_{safe}$ obtained with our model with a sum-of-squares optimization based approach \cite{SANTOYO2021109439}, which arguably is the state-of-the-art for finding barrier functions for stochastic systems. For all the benchmarks we consider SoS polynomials of order up to 15. In all cases it is possible to observe that our approach based on neural barrier functions (NBF) outperforms SoS optimization in terms of the tightness of the bounds. For instance, in the Dubin's car model, arguably the hardest example we consider due to its non-polynomial nature, SoS either fails due to excessive memory requirements or return a trivial lower bound on $0$, while our framework obtains a lower bound of $0.87$. In contrast, in the linear system both SoS and our approach obtain a similar certified level of safety, but SoS is substantially faster (orders of minutes for the linear system), as our framework still requires to first train a neural network and then certify it (orders of few hours for all benchmarks as we used the same neural network architecture).
To understand the difference in certified safety, we study contour plots of the barrier function with the initial and unsafe sets for the 2-D polynomial (see Figure \ref{fig:contour}).
Note that the certified lower bound for $P_{safe}$ via Proposition \ref{prop:probability_of_failure} can be non-zero only in regions where $B(x)<1$.  
Interestingly, we observe this region is significantly smaller for \gls{sos} compared \gls{nbf}, which is attributed to the reduced expressivity of the SoS polynomial.
The differences in region sizes and the distances to the initial sets explain the certification result for the 2-D polynomial system we see in Table \ref{tab:certification_results}.

\begin{figure}
    \centering
    \begin{subfigure}[b]{0.40\textwidth}
        \centering
        \includegraphics[width=\textwidth]{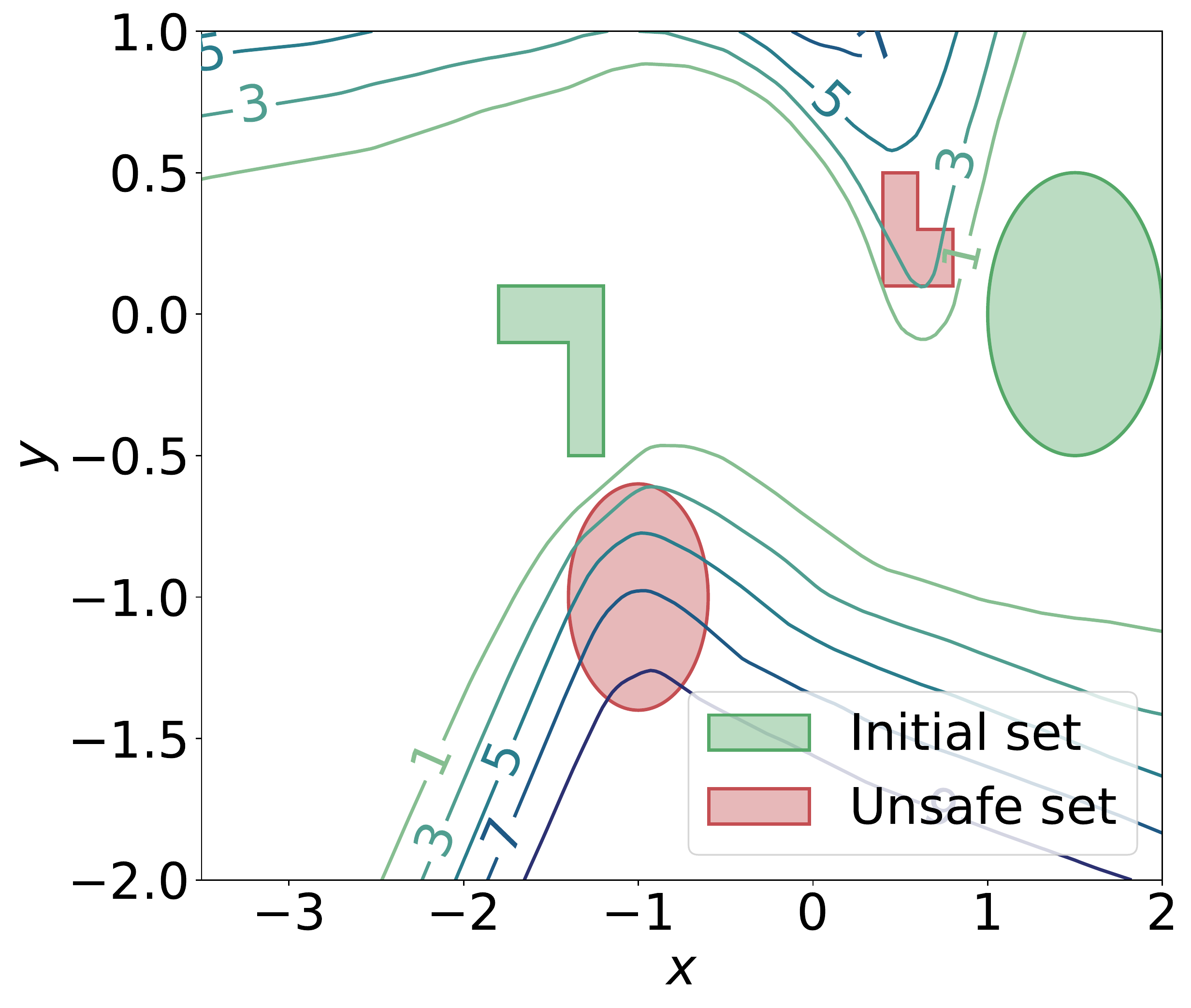}
        \caption{NBF}
    \end{subfigure}
    \begin{subfigure}[b]{0.40\textwidth}
        \centering
        \includegraphics[width=\textwidth]{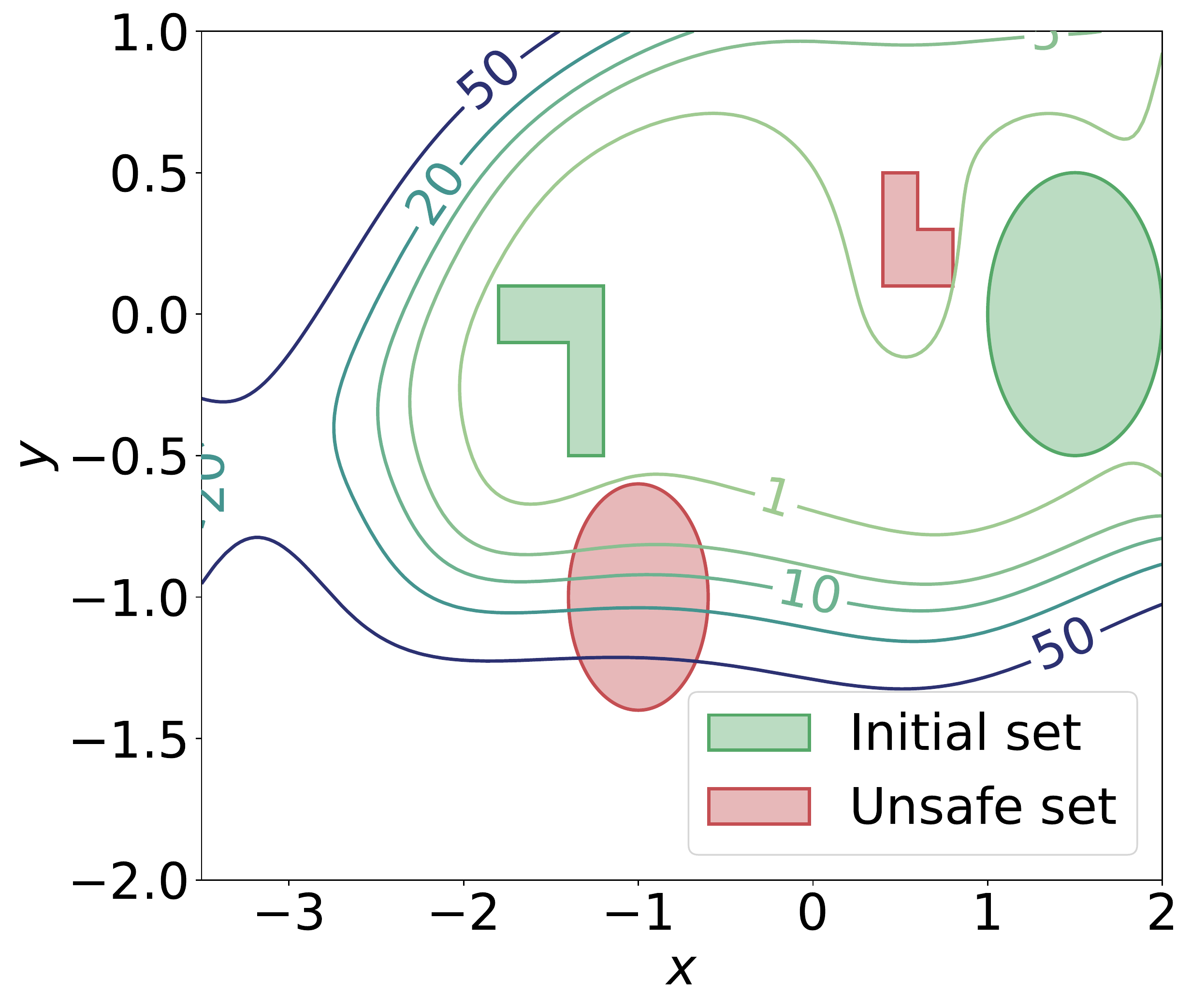}
        \caption{\gls{sos}}
    \end{subfigure}
    \caption{Levelset for $B_\theta$ \textbf{(a)} and $B_{SoS,8}$, an $8-$th order SoS barrier function \textbf{(b)} for the 2-D polynomial system. The neural network is more flexible to capture complex shapes of the unsafe set and less sharply increasing in value.}
    \label{fig:contour}
\end{figure}

\section{Conclusion}\label{sec:conclusion}
We studied probabilistic safety certification of stochastic systems using \acrfull{nbf}.
We presented algorithms to train NBFs and show that the problem of certifying that a neural network is a NBF for a given stocahstic system reduces to the solution of a set of linear programs. The scalability of our framework is guaranteed by a  branch-and-bound approach.
We evaluated our method on linear, polynomial, and non-linear and non-polynomial systems, beating state-of-the-art methods on all systems, thus certifying previously intractable non-linear systems. Hence, this works make a clear step towards the safe adoption of autonoumous systems in  safety-critical settings.
Future work may address scalability with probabilistic verification of barrier conditions and extend \glspl{nbf} to deterministic and continuous-time systems.

\printbibliography

\newpage

\appendix

\section{Appendix}

\subsection{Proof of Lemma \ref{lemma:Conditions}}
\setcounter{theorem}{1}
\begin{lemma}
Let $Q_{X_u}\subseteq Q$ and $Q_{X_0}\subseteq Q$ be such that $X_u \subseteq \cup_{q\in Q_{X_u}}q$ and $ X_0 \subseteq \cup_{q\in Q_{X_0}} q  $. Choose
$\gamma=\max_{q\in Q_{X_0}}\max_{x\in q} A^{\top}_qx + b^{\top}_q $. Then, if 
\begin{align}
   & \min_{q\in Q}\min_{x\in q} A^{\bot}_qx + b^{\bot}_q \geq 0 \qquad 
   & \min_{q\in Q_{X_u}}\min_{x\in q} A^{\bot}_qx + b^{\bot}_q \geq 1,
\end{align}
Conditions \ref{eq:barrier_ss}-\ref{eq:barrier_initial} are satisfied.
\end{lemma}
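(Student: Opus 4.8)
The plan is to verify each of the three conditions \eqref{eq:barrier_ss}, \eqref{eq:barrier_unsafe}, and \eqref{eq:barrier_initial} separately, in each case exploiting two facts: that the relevant set is contained in the union of its associated partition regions, and that within each region the linear relaxation (Definition \ref{def:linear_relaxation}) sandwiches $B_\theta$ pointwise. The whole argument is a pointwise propagation of the scalar inequalities appearing in the hypotheses through the covering property of the partition, so no analytic machinery beyond these two observations is required.

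First I would handle \eqref{eq:barrier_ss}. Since $Q$ covers $X$, any $x \in X$ lies in some $q \in Q$, and the lower relaxation gives $B_\theta(x) \geq A^\bot_q x + b^\bot_q \geq \min_{x' \in q}(A^\bot_q x' + b^\bot_q) \geq \min_{q' \in Q}\min_{x' \in q'}(A^\bot_{q'} x' + b^\bot_{q'}) \geq 0$, where the last inequality is precisely the first hypothesis. Hence $B_\theta(x) \geq 0$ for all $x \in X$, establishing \eqref{eq:barrier_ss}.

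The same template then dispatches the remaining two conditions, with only the index set and the direction of the relaxation changing. For \eqref{eq:barrier_unsafe}, any $x \in X_u$ lies in some $q \in Q_{X_u}$ by the covering assumption $X_u \subseteq \cup_{q \in Q_{X_u}} q$, and combining the lower relaxation with the second hypothesis yields $B_\theta(x) \geq \min_{q' \in Q_{X_u}}\min_{x' \in q'}(A^\bot_{q'} x' + b^\bot_{q'}) \geq 1$. For \eqref{eq:barrier_initial}, any $x \in X_0$ lies in some $q \in Q_{X_0}$, and here I would instead use the upper relaxation, obtaining $B_\theta(x) \leq A^\top_q x + b^\top_q \leq \max_{q' \in Q_{X_0}}\max_{x' \in q'}(A^\top_{q'} x' + b^\top_{q'}) = \gamma$ directly from the chosen value of $\gamma$.

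There is no genuine obstacle to overcome: the statement follows immediately from the defining inequalities of a linear relaxation together with the covering property of the partitions, which is exactly why the paper calls it trivial. The only point demanding care is the bookkeeping of index sets — the state-space condition must be minimized over the full collection $Q$, whereas the unsafe and initial conditions are controlled only over the sub-collections $Q_{X_u}$ and $Q_{X_0}$ — together with checking that each target set ($X$, $X_u$, $X_0$) is indeed covered by the union of its regions so that every point falls into at least one $q$ for which a relaxation is available.
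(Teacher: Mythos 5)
Your proof is correct and follows essentially the same argument as the paper's: each of Conditions \eqref{eq:barrier_ss}--\eqref{eq:barrier_initial} is verified separately by combining the covering property of the partition with the pointwise sandwich provided by the linear relaxations, the only cosmetic difference being that you propagate the inequalities pointwise from an arbitrary $x$ while the paper phrases the same chain as a comparison of minima (resp.\ maxima) over regions.
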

\begin{proof}
    
    \textbf{Condition \ref{eq:barrier_ss}.}
    Let $A_q^{\bot}x + b_q^{\bot}$ be a linear lower bound of $B_\theta(x)$ local to a region $q$, i.e. $A_q^{\bot}x + b_q^{\bot} \leq B_\theta(x)$ for all $x \in q$. 
    Then $\min_{x\in q}A_q^{\bot}x + b_q^{\bot}\leq \min_{x\in q}B_\theta(x)$, and therefore $\min_{q\in Q}\min_{x\in q}A_q^{\bot}x + b_q^{\bot}\leq \min_{q\in Q}\min_{x\in q}B_\theta(x)$ where $Q$ is a set of regions partitioning $X$.
    Since $Q$ is a partition of $X$, $\min_{q\in Q}\min_{x\in q}B_\theta(x) = \min_{x \in X} B_\theta(x)$. Therefore, we can conclude if $\min_{q\in Q}\min_{x\in q}A_q^{\bot}x + b_q^{\bot} \geq 0$ then $B_\theta(x) \geq 0$ for all $x \in X$ (Condition \ref{eq:barrier_ss}) is satisfied.
    
    \textbf{Condition \ref{eq:barrier_unsafe}.}
    Following the same proof structure, let $A_q^{\bot}x + b_q^{\bot}$ be a linear lower bound of $B_\theta(x)$ local to a region $q$.
    Then $\min_{q\in Q_{X_u}}\min_{x\in q}A_q^{\bot}x + b_q^{\bot}\leq \min_{q\in Q_{X_u}}\min_{x\in q}B_\theta(x)$ where $Q_{X_u} \subseteq Q$ is a set of regions covering $X_u$, i.e. $X_u \subseteq \cup_{q \in Q_{X_u}} q$.
    Since $Q_{X_u}$ is a partition covering $X_u$, minimum of $B_\theta$ of all regions in $Q_{X_u}$ is a lower bound for $B_\theta$ in $X_u$, \[\min_{q\in Q_{X_u}}\min_{x\in q}B_\theta(x) \leq \min_{x \in X_u} B_\theta(x).\]
    Therefore, if $\min_{q\in Q_{X_u}}\min_{x\in q}A_q^{\bot}x + b_q^{\bot} \geq 1$ then $B_\theta(x) \geq 1$ for all $x \in X_u$ (Condition \ref{eq:barrier_unsafe}) is satisfied.
    
    \textbf{Condition \ref{eq:barrier_initial}.}
    Once again following the same proof structure except for proving an upper bound, let $A_q^{\top}x + b_q^{\top}$ be a linear upper bound of $B_\theta(x)$ local to a region $q$.
    Then $\max_{q\in Q_{X_0}}\max_{x\in q}A_q^{\top}x + b_q^{\top}\geq \max_{q\in Q_{X_0}}\min_{x\in q}B_\theta(x)$ where $Q_{X_0} \subseteq Q$ such that $X_0 \subseteq \cup_{q \in Q_{X_0}} q$.
    Since $Q_{X_0}$ is a partition covering $X_0$,\[\max_{q\in Q_{X_0}}\max_{x\in q}B_\theta(x) \geq \max_{x \in X_0} B_\theta(x).\]
    Therefore, choosing $\gamma=\max_{q\in Q_{X_0}}\max_{x\in q} A^{\top}_qx + b^{\top}_q $ yields $\gamma \geq \max_{x\in X_0} B_\theta(x)$.
    We conclude $B_\theta(x) \leq \gamma$ for all $x \in X_0$ (Condition \ref{eq:barrier_initial}) is satisfied.
\end{proof}

\subsection{Proof of Theorem \ref{th:main-Theorem}}
We restate Theorem \ref{th:main-Theorem} and then prove it. Recall that for each $q\in Q$ and $\tilde{q} = ({q}_x, {q}_v)\in \tilde{Q}$ we let row vectors $A^{\bot}_{q},A^{\bot}_{{q}_x}, A^{\bot}_{{q}_v}, A^{\top}_{q}, A^{\top}_{{q}_x}, A^{\top}_{{q}_v} \in \mathbb{R}^{1\times n}$ and scalars $b^{\bot}_{{q}}, b^{\bot}_{(q_x,q_v)}, b^{\top}_{{q}}$, $ b^{\top}_{(q_x,q_v)} \in \mathbb{R}$ be such that
\begin{align*}
    \forall x \in q,& \qquad  A^{\bot}_qx + b^{\bot}_q \leq B_{\theta}(x) \leq A^{\top}_qx + b^{\top}_q\\
    \forall (x',v') \in \tilde{q},& \qquad A^{\bot}_{{q}_x}x' + A^{\bot}_{{q}_v}v' + b^{\bot}_{(q_x,q_v)} \leq B_{\theta}(F(x')+v')  \leq  A^{\top}_{{q}_x}x'  + A^{\top}_{{q}_v}v'  + b^{\top}_{(q_x,q_v)}.
\end{align*}
\begin{theorem}
    Let ${Q}$ and $Q_{V}$ respectively be partitions of $X$ and $V$. Let ${Q}_{X_s}\subseteq Q$ be such that $\cup_{q\in {Q}_{X_s}}q \subseteq X_s$. For $\tilde{q}=({q}_x,{q}_v)\in Q\times Q_{V}$ define 
    \begin{align*}
        A_{({q}_x,{q}_v)}=A^{\top}_{{q}_x} \int_{{q}_v}p(v)\,dv, \qquad
        b_{({q}_x,{q}_v)}=b^{\top}_{(q_x,q_v)} \int_{{q}_v}p(v)\,dv + A^{\top}_{{q}_v}\int_{{q}_v} v p(v)\,dv ,
    \end{align*}
 and assume
    \begin{align}
    \beta \geq \max_{q \in Q_{X_s}}  \max_{x \in q}\left( \big( \sum_{q_v \in Q_{V}} A_{(q,q_v)} - A^{\bot}_q \big) x +\big(\sum_{q_v \in Q_{V}}b_{(q,q_v)} - b^{\bot}_q\big) \right).
    \end{align}
    Then, for any  $x\in X_s$ it holds that $\mathbb{E}[B_\theta(F(x) + \mathbf{v}) ]-B_\theta(x) \leq \beta.$
\end{theorem}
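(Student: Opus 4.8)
The plan is to bound the expectation directly by writing it as an integral against the noise density, decomposing that integral over the partition $Q_V$, and on each piece replacing $B_\theta$ by its linear upper relaxation from \eqref{eq:linearizationNoise}. Concretely, for a fixed $x$ lying in some state region $q\in Q_{X_s}$, I would start from
\[
\mathbb{E}[B_\theta(F(x)+\mathbf{v})]=\int_V B_\theta(F(x)+v)\,p(v)\,dv=\sum_{q_v\in Q_V}\int_{q_v}B_\theta(F(x)+v)\,p(v)\,dv,
\]
where the second equality uses that $Q_V$ partitions $V$. This reduces the global, analytically intractable expectation to a finite sum of local integrals on which the relaxation is valid.

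The key step is to bound each local integral. For $x\in q$ and $v\in q_v$, relaxation \eqref{eq:linearizationNoise} with $q_x=q$ gives $B_\theta(F(x)+v)\leq A^{\top}_{q}x+A^{\top}_{q_v}v+b^{\top}_{(q,q_v)}$. Integrating this affine-in-$v$ bound against $p(v)$ over $q_v$ and using linearity of the integral, the term $A^{\top}_q x$ and the scalar $b^{\top}_{(q,q_v)}$ (both independent of $v$) factor out the probability mass $\int_{q_v}p(v)\,dv$, while the $A^{\top}_{q_v}v$ term produces the partial first moment $A^{\top}_{q_v}\int_{q_v}v\,p(v)\,dv$. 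This is exactly where the additive structure of the noise and the linearity of expectation are exploited: the noise enters the relaxation linearly, so its expectation can be pushed through in closed form. Collecting terms recovers precisely the definitions of $A_{(q,q_v)}$ and $b_{(q,q_v)}$, yielding $\int_{q_v}B_\theta(F(x)+v)\,p(v)\,dv\leq A_{(q,q_v)}x+b_{(q,q_v)}$.

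Summing over $q_v\in Q_V$ then gives $\mathbb{E}[B_\theta(F(x)+\mathbf{v})]\leq\big(\sum_{q_v}A_{(q,q_v)}\big)x+\sum_{q_v}b_{(q,q_v)}$. To finish, I would subtract $B_\theta(x)$, lower-bounded on $q$ by its linear relaxation \eqref{eq:linearizationState}, i.e.\ $-B_\theta(x)\leq-(A^{\bot}_q x+b^{\bot}_q)$, obtaining
\[
\mathbb{E}[B_\theta(F(x)+\mathbf{v})]-B_\theta(x)\leq\Big(\sum_{q_v\in Q_V}A_{(q,q_v)}-A^{\bot}_q\Big)x+\Big(\sum_{q_v\in Q_V}b_{(q,q_v)}-b^{\bot}_q\Big).
\]
Maximizing the right-hand side over $x\in q$ and over $q\in Q_{X_s}$ produces an upper bound that, by assumption \eqref{Eqn:Beta}, is at most $\beta$; since each relevant $x$ belongs to some region $q$, the claim follows.

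The main obstacle I anticipate is the integrability and well-definedness of the local integrals when $p$ has unbounded support, so that some $q_v$ are unbounded and the linear relaxation of $B_\theta(F(\cdot)+\cdot)$ may fail to exist there. This is precisely the gap the surrounding text closes by invoking the stopped process and setting $B_\theta\equiv 0$ outside the compact set $X$, which confines the contributing noise to the bounded set $V'$ and makes the moments $\int_{q_v}p\,dv$ and $\int_{q_v}v\,p\,dv$ finite and computable; a careful write-up should state this reduction before splitting the integral. A secondary point to handle cleanly is that the regions indexing the maximum must actually cover $X_s$, so that the per-region bound genuinely applies to every $x\in X_s$ rather than only to $\cup_{q\in Q_{X_s}}q$.
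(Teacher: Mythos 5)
Your proof follows essentially the same route as the paper's: decompose the expectation over the noise partition $Q_V$, apply the linear upper relaxation \eqref{eq:linearizationNoise} on each piece, integrate termwise (exploiting the additive noise and linearity) to recover $A_{(q,q_v)}x + b_{(q,q_v)}$, subtract the lower relaxation \eqref{eq:linearizationState} of $B_\theta(x)$, and maximize over $x\in q$ and $q\in Q_{X_s}$. Your closing caveats---the stopped-process reduction needed when $p(v)$ has unbounded support, and the fact that the regions in $Q_{X_s}$ must actually cover $X_s$ for the conclusion to apply to every $x\in X_s$ rather than only to $\cup_{q\in Q_{X_s}}q$---are valid observations about details the paper relegates to surrounding text or glosses over, but they do not alter the argument.
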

\begin{proof}

By assumption it holds that for
each partition $\tilde{q}=({q_x},{q}_v)$
\begin{equation}
    B_{\theta}(F(x)+v) \leq A^{\top}_{{q_x}}x  + A^{\top}_{{q}_v}v  + b^{\top}_{(q_x,q_v)} \qquad \forall (x, v) \in \tilde{q}=({q_x},{q}_v)
\end{equation}
Hence, for $x\in q_x$ it holds that
\begin{equation}
    \begin{aligned}
        \mathbb{E}[B_\theta(F(x) + \mathbf{v}) \mid x] &=\sum_{q_v \in Q_{V}}\int_{q_v} B_\theta(F(x) + v)p(v)\,dv \\
                                                   &\leq \sum_{q_v \in Q_{V}}\int_{q_v} \left(A^{\top}_{{q_x}}x  + A^{\top}_{{q}_v}v  + b^{\top}_{(q_x,q_v)}\right)p(v)\,dv\\
                                                   & = \sum_{q_v \in Q_{V}} A_{({q_x},{q}_v)}x + b_{({q_x},{q}_v)}.
    \end{aligned}
\end{equation}
We can now combine the above bound with $A^{\bot}_qx + b^{\bot}_q$, the lower bound of $B_\theta(x)$. It then follows that
\begin{equation}
\begin{aligned}
    &\max_{x \in X_s}\big(\mathbb{E}[B_\theta(F(x) + \mathbf{v}) ]-B_\theta(x)\big) \leq \\
    &\qquad\qquad \max_{q \in Q_{X_s}}  \max_{x \in q}\left( \big( \sum_{q_v \in Q_{V}} A_{(q,q_v)} - A^{\bot}_q \big) x +\big(\sum_{q_v \in Q_{V}}b_{(q,q_v)} - b^{\bot}_q\big) \right)
\end{aligned}
\end{equation}

Therefore, if we pick $\beta$ such that
\begin{equation}
\max_{q \in Q_{X_s}}  \max_{x \in q}\left( \big( \sum_{q_v \in Q_{V}} A_{(q,q_v)} - A^{\bot}_q \big) x +\big(\sum_{q_v \in Q_{V}}b_{(q,q_v)} - b^{\bot}_q\big) \right) \leq \beta
\end{equation}
it holds that  $\mathbb{E}[B_\theta(F(x) + \mathbf{v}) ]-B_\theta(x) \leq \beta$.
\end{proof}

\subsection{Branch-and-bound algorithm}
Similarly to Section \ref{sec:method_verify_bab}, we only show the partitioning algorithm for verifying $B(x) \geq 1$ for all $x\in X_u$. The partitioning for the remaining barrier conditions (Condition \ref{eq:barrier_ss}, \ref{eq:barrier_initial}, \ref{eq:barrier_expectation}) follow analogously.
The algorithm starts from a coarse initial partition of $X_u$ called $Q_{init}$.
A possible initial partition is a single hyperrectangle encompassing $X$, which will always exist since $X$ is bounded.
Next, we find linear relaxations of $B_\theta$ for each region $q$ in the partition (Line 3-5), which we use for proving if $B_\theta(x) \geq 1$ for all $x \in X_u$.
This is the case if $\min_{q\in Q} \min_{x \in q} A_{q}^{\bot}x + b_{q}^{\bot} \geq 1$, meaning that we stop partitioning if this condition is satisfied (Line 7).
The other stop condition $\min_{q\in Q} \min_{x\in q} A^\top_q x + b^\top_q - \min_{q\in Q} \min_{x\in q} A^\bot_q x + b^\bot_q \leq t_{gap}$ is to stop the partitioning if the found lower bound is within $t_{gap}$ of the true minimum (Line 6).
If neither of the two stop conditions are satisfied, we refine the partition by splitting all regions (Line 8).
To select the split axis, we pick the one with the largest linear coefficients, because that maximizes tightening of both upper and lower bounds, weighted by the width of the region along the given axis to avoid elongated regions because that empirically yields loose bounds (Line 21-22)
Finally, to combat the exponential growth of splitting, we prune regions that cannot contain $\min_{x \in X_u} B(x)$, based on two conditions: $q \cap X_u = \emptyset$ meaning that a region has been split such that $q$ no longer overlaps with $X_u$, and the lower bound $\min_{x\in q} A_{q}^{\bot}x + b_{q}^{\bot}$ is larger than an upper bound for $\min_{x \in X_u} B(x)$ (Line 12-13).
\begin{algorithm}
\caption{Partitioning of unsafe set $X_u$ based on local linear relaxations of $B(x)$ to find $\min_{x \in X_u} B(x) $ given an initial partition $Q_{init}$ of $X_u$.}\label{alg:partitioning}
\begin{algorithmic}[1]
\Function{Partitioning-Unsafe}{$Q_{init}$, $t_{gap}$}
    \State $Q \gets Q_{init}$
    \For{Region $q$ in $Q$}
        \State $A_{q}^{\bot}, b_{q}^{\bot}, A_{q}^{\top}, b_{q}^{\top} \gets \Call{Crown}{B, q}$
    \EndFor
    
    \WhileNoDo{$\min_{q\in Q} \min_{x\in q} A^\top_q x + b^\top_q - \min_{q\in Q} \min_{x\in q} A^\bot_q x + b^\bot_q \geq t_{gap}$ \,\textit{and}}
        \State\hskip10pt $\min_{q\in Q} \min_{x \in q} A_{q}^{\bot}x + b_{q}^{\bot} \leq 1$ \algorithmicdo
        \State $Q \gets \Call{Split}{Q, f}$
        \For{Region $q$ in $Q$}
            \State $A_{q}^{\bot}, b_{q}^{\bot}, A_{q}^{\top}, b_{q}^{\top} \gets \Call{Crown}{B, q}$
        \EndFor
        \State $b_{lub} \gets \min_{q_i\in Q} \min_{x\in q} A_q^{\top}x + b_{q}^{\top}$ \Comment{Least upper bound}
        \State $Q \gets \{q \in Q \mid \min_{x\in q} A_{q}^{\bot}x + b_{q}^{\bot} \leq b_{lub} \text{ and } q \cap X_u \neq \emptyset \}$
    \EndWhile
    
    \State \textbf{return} $\min_{x \in q} A_{q}^{\bot}x + b_{q}^{\bot}$
\EndFunction
\Function{Split}{$Q$, $f$}
    \State $Q_{new} \gets \emptyset$
    \For{Region $q$ in $Q$}
        \State $A_{q}^{\bot}, b_{q}^{\bot}, A_{q}^{\top}, b_{q}^{\top} \gets \Call{Crown}{B, q}$
        \State $A_{q} \gets (|A_{q}^{\bot}| + |A_{q}^{\top}|) \odot (q^{\top} - q^{\bot})$ \Comment{Find axis with largest influence on bounds} 
        \State $d \gets \argmax_{1\leq i \leq n} A_{q, i}$ \Comment{Select axis to split} 
        \State $q_{1}, q_{2} \gets \Call{Split-Mid}{q, d}$ \Comment{Split hyperrectangle at midpoint along axis $d$}
        \State $Q_{new} \gets Q_{new} \cup \{q_{1}, q_{2}\}$
    \EndFor
    \State \textbf{return} $Q_{new}$
\EndFunction
\end{algorithmic}
\end{algorithm}

\subsection{Experiment details}\label{sec:experimental_details}
In this section, we describe the dynamics of each experimental system and the associated safe/unsafe and initial regions. 
For all systems, we define a step horizon $H = 10$ and the number for samples to approximate the expectation during training $l = 500$.


\paragraph{Linear dynamics}
We adopt the linear, discrete-time system from \cite{SANTOYO2021109439}, which represents juvenile/adult population dynamics \cite{Iannelli2014}. 
The system is governed following stochastic difference equation
\begin{equation}\label{eq:linear_dynamics}
\mathbf{x}[k+1] = \begin{bmatrix} 0 & m_3 \\ m_1 & m_2 \end{bmatrix} \mathbf{x}[k] + \mathbf{v}[k]
\end{equation}
We choose parameters $m_1 = 0.3$, $m_2 = 0.8$, and $m_3 = 0.4$, and $\mathbf{v}[k] \sim \mathcal{N}(\cdot \mid [0, 0]^T, [0, 0.1]^T)$, which we remark are different from \cite{SANTOYO2021109439}.
The parameters in \cite{SANTOYO2021109439} are unstable since the basic reproduction number $R = \frac{m_1 m_3}{1 - m_2}$ is $\frac{0.5\cdot 0.5}{1 - 0.95} = 5 > 1$, meaning the origin is an unstable equilibrium \cite{Iannelli2014}. 
We believe the results reported \cite{SANTOYO2021109439} are the result of a discrepancy in the associated code compared to the stochastic difference equation in the paper. More specifically, the columns in the top row of the dynamics matrix are swapped, resulting in one-way interaction rather than two-way interaction. 
We define the state space, initial and safe set as follows with $X_u = X \backslash X_s$
\begin{equation}
\begin{gathered}
    X = \{x \in \mathbb{R}^2 \mid -3 \leq x_1 \leq 3 \text{ and } -3 \leq x_2 \leq 3\} \\
    X_0 = \{x \in X \mid x_1^2 + x_2^2 \leq 1.5^2 \}, \qquad X_s = \{x \in X \mid x_1^2 + x_2^2 \leq 2^2\}
\end{gathered}
\end{equation}

\paragraph{Polynomial model}
For a polynomial system, we adapt 2-D system $\mathbf{barr}_3$ from \cite{Abate2021} by discretizing time using an Euler integrator and adding noise.
Due to the discretization, letting $h$ denote the step size, the time horizon is $H \cdot h$ with a step horizon $H$.
Particular for this system is that both the initial and unsafe sets consist of two disjoint regions, which is shown in Fig. \ref{fig:contour}.
A polynomial system can be directly encoded \gls{sos} optimization.

\begin{equation}
\begin{aligned}
\mathbf{x}[k+1]_{1}&= \mathbf{x}[k]_{1} + h \cdot \mathbf{x}[k]_{2} + \mathbf{v}[k]_{1}\\
\mathbf{x}[k+1]_{2}&= \mathbf{x}[k]_{2} + h \cdot \left(\frac{1}{3}\mathbf{x}[k]_{1}^3 - \mathbf{x}[k]_{1} - \mathbf{x}[k]_{2}\right) + \mathbf{v}[k]_{2} \\
\end{aligned}
\end{equation}
where $\mathbf{v}_k \sim \mathcal{N}(\cdot \mid [0, 0]^T, [0.01, 0]^T)$. 
The state space, initial and unsafe set (also shown in Figure \ref{fig:contour}) are defined as follows with $X_s = X \backslash X_u$
\begin{equation}
\begin{aligned}
    X &= \{x \in \mathbb{R}^2 \mid -3.5 \leq x_1 \leq 2 \text{ and } -2 \leq x_2 \leq 1\} \\
    X_0 &= circ(-1.5, 0, 0.5) \cup rect(-1.8, -0.1, 0.6, 0.2) \cup rect(-1.4, -0.5, 0.2, 0.6) \\
    X_u &= circ(-1, -1, 0.4) \cup rect(0.4, 0.1, 0.2, 0.4) \cup rect(0.4, 0.1, 0.4, 0.2)
\end{aligned}
\end{equation}
where $circ(a, b, r) = \{x \in \mathbb{R}^2 \mid (x_1 - a)^2 + (x_2 - b)^2 \leq r^2\}$ is a circle with radius $r$ centered at $(a, b)$ and $rect(a, b, c, d) = \{x \in \mathbb{R}^2 \mid a \leq x_1 \leq a + c \text{ and } b \leq x_2 \leq b + d\}$ is a rectangle with the lower corner at $(a, b)$ with width $c$ and height $d$.
We choose a step size $h = 0.1$.

\paragraph{Dubin's car}
System $\mathbf{barr}_4$ from \cite{Abate2021}, also known as Dubin's car \cite{8461017}, is our non-polynomial experimental system.
The state of system is the position in a plane and the heading of the vehicle, and the steering angle is bounded.
We adapt the system from continuous-time deterministic to discrete-time stochastic by discretizing time with an Euler integrator and adding noise to steering angle.
To encode the system into a polynomial suitable for \gls{sos}, we partition the state space into a grid, compute linear bounds of the nominal dynamics with CROWN, and use Putinar's Positivstellensatz \cite{Putinar1993}.

Let $h$ denote the step size, i.e. $H \cdot h$ is the time horizon. Then, dynamics are governed by
\begin{equation}
\begin{aligned}
\mathbf{x}[k+1]_{1}&= \mathbf{x}[k]_{1} + h \cdot v \sin(\mathbf{x}[k]_{3}) + \mathbf{v}[k]_{1}\\
\mathbf{x}[k+1]_{2}&= \mathbf{x}[k]_{2} + h \cdot v \cos(\mathbf{x}[k]_{3}) + \mathbf{v}[k]_{2} \\
\mathbf{x}[k+1]_{3}&= \mathbf{x}[k]_{3} + h \cdot u + \mathbf{v}[k]_{3} \\
\end{aligned}
\end{equation}
where $\mathbf{v}[k] \sim \mathcal{N}(\cdot \mid [0, 0, 0]^T, [0, 0, 0.01]^T)$, $v$ is the velocity, and $u$ denotes the steering angle. 
We choose a steering angle $u = 1 \,/\, 0.95$ such that the vehicle drives in a clockwise circle with a radius corresponding to distance between the origin and the starting position.
We additionally choose velocity $v=1$ and step size $h = 0.1$.
The state space, initial and unsafe set are defined as follows with $X_u = X \backslash X_s$
\begin{equation}
\begin{gathered}
    X = \{x \in \mathbb{R}^3 \mid -2 \leq x_1 \leq 2 \text{ and } -2 \leq x_2 \leq 2 \text{ and } -\pi / 2 \leq x_2 \leq \pi / 2 \} \\
    X_0 = \{(-0.95, 0, 0)\}, \qquad
    X_s = \{x \in X \mid -1.9 \leq x_1 \leq 1.9 \text{ and } -1.9 \leq x_2 \leq 1.9 \}
\end{gathered}
\end{equation}

\subsection{Study of \texorpdfstring{$\epsilon$}{}-hyperrectangles for training}
We analyze the impact of the varying $\epsilon$; half the width of the input hyperrectangle during training.
The analysis is conducted on the polynomial system as described in Sec. \ref{sec:experimental_details} because it has complex dynamics but is 2-D, so we may easily plot levelsets to study the impact.
Figure \ref{fig:epsilon_contour} shows contour plots of different $B_\theta(x)$ learned with various $\epsilon$ in increasing order.
The obvious change is that the increasing in barrier value over the state space is less for larger $\epsilon$, which is intuitive as the larger $\epsilon$ yields looser bounds hence the adversarial training promotes a flatter surface.
While a flat surface is good for a smaller $\beta$, it is a trade-off as it requires a larger $\gamma$ to ensure that the $B_\theta(x) \geq 1$ for $x \in X_u$.
Hence tuning $\epsilon$ is a trade-off between small $\beta$ and $\gamma$.

\begin{figure}
    \centering
    \begin{subfigure}[b]{0.4\textwidth}
        \centering
        \hspace*{-25pt}
        \includegraphics[width=\textwidth]{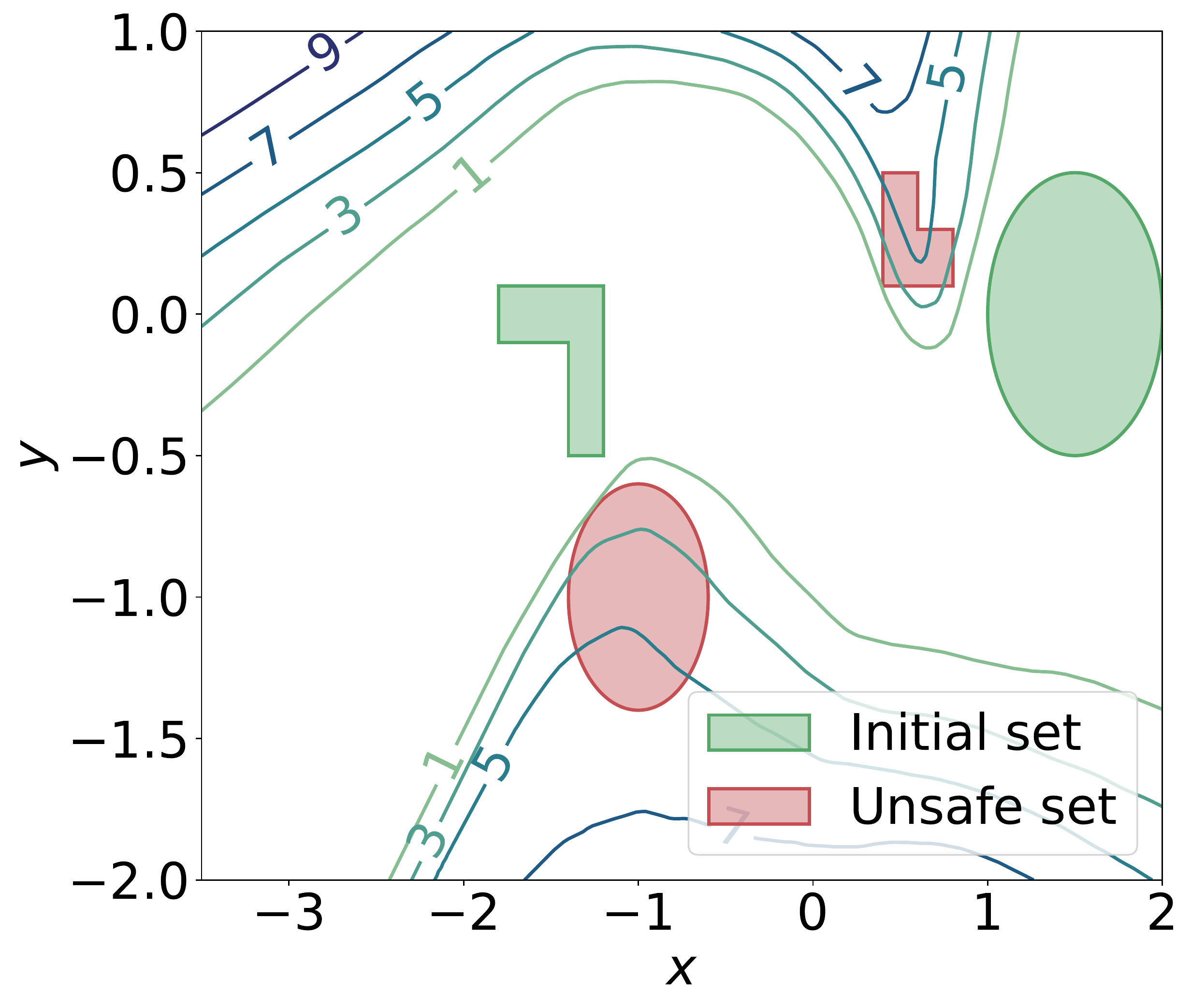}
        \caption{$\epsilon = 0.00001$}
    \end{subfigure}
    \begin{subfigure}[b]{0.4\textwidth}
        \centering
        \hspace*{-25pt}
        \includegraphics[width=\textwidth]{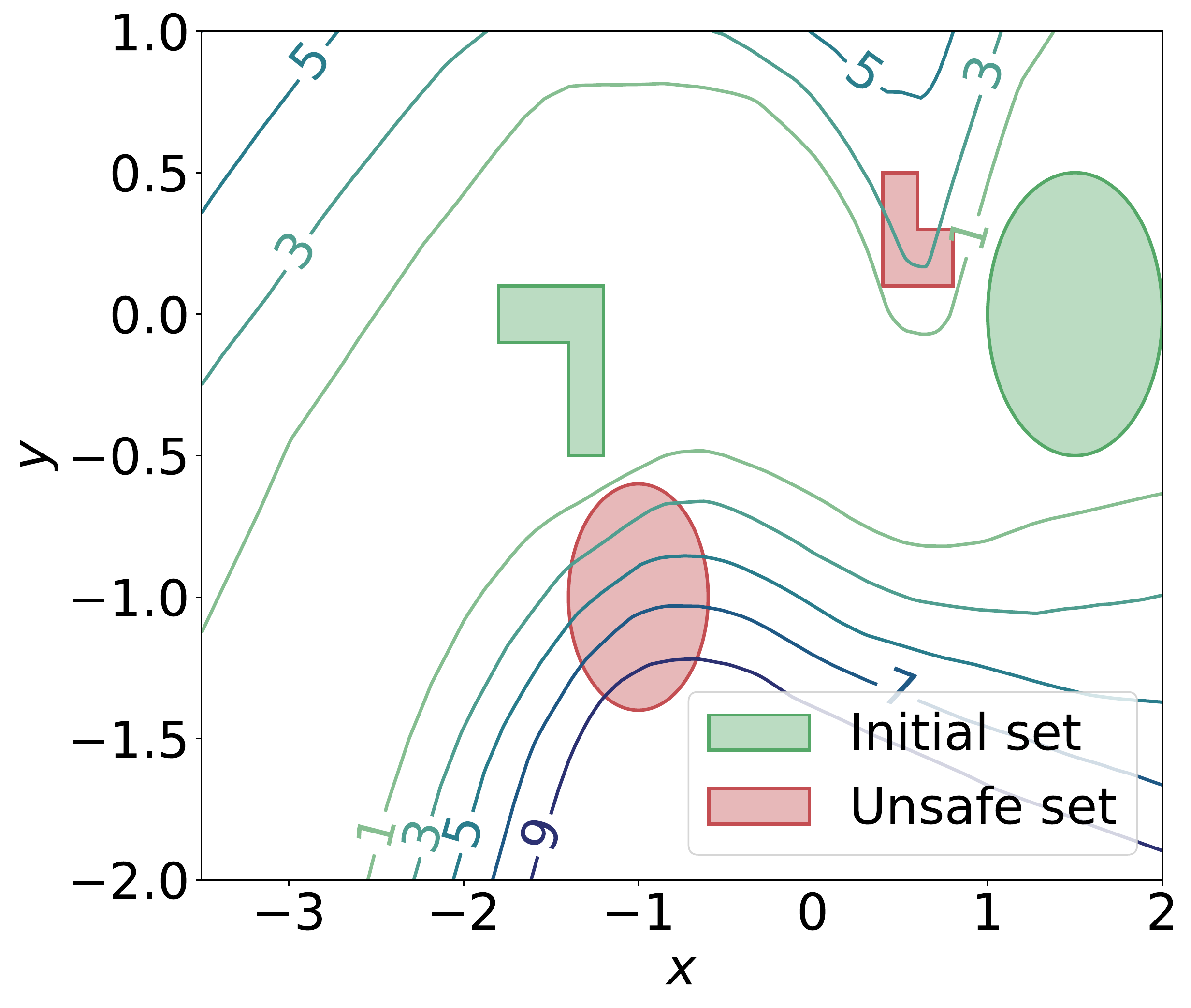}
        \caption{$\epsilon = 0.0001$}
    \end{subfigure}
    \begin{subfigure}[b]{0.4\textwidth}
        \centering
        \hspace*{-25pt}
        \includegraphics[width=\textwidth]{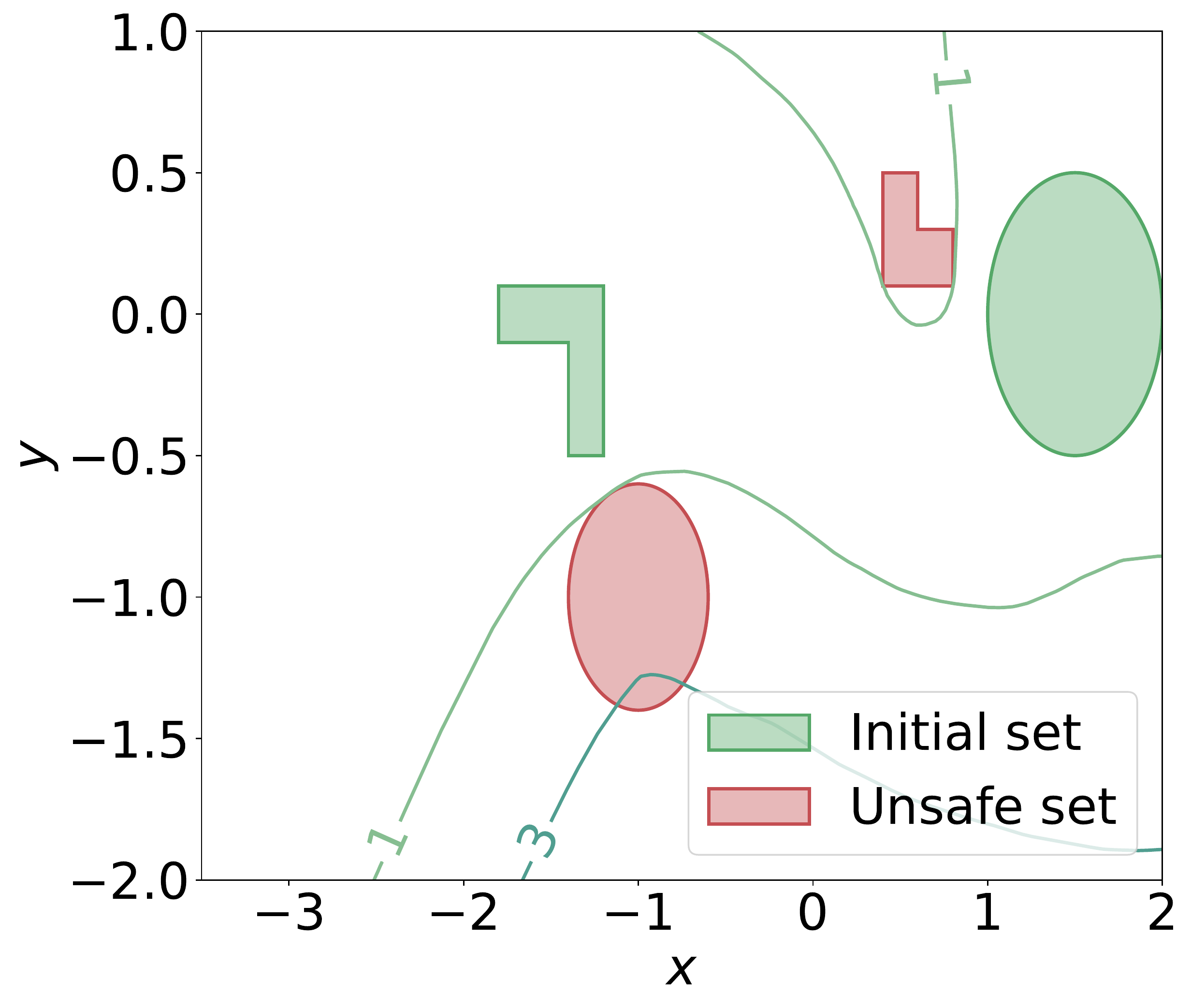}
        \caption{$\epsilon = 0.001$}
    \end{subfigure}
    \begin{subfigure}[b]{0.4\textwidth}
        \centering
        \hspace*{-25pt}
        \includegraphics[width=\textwidth]{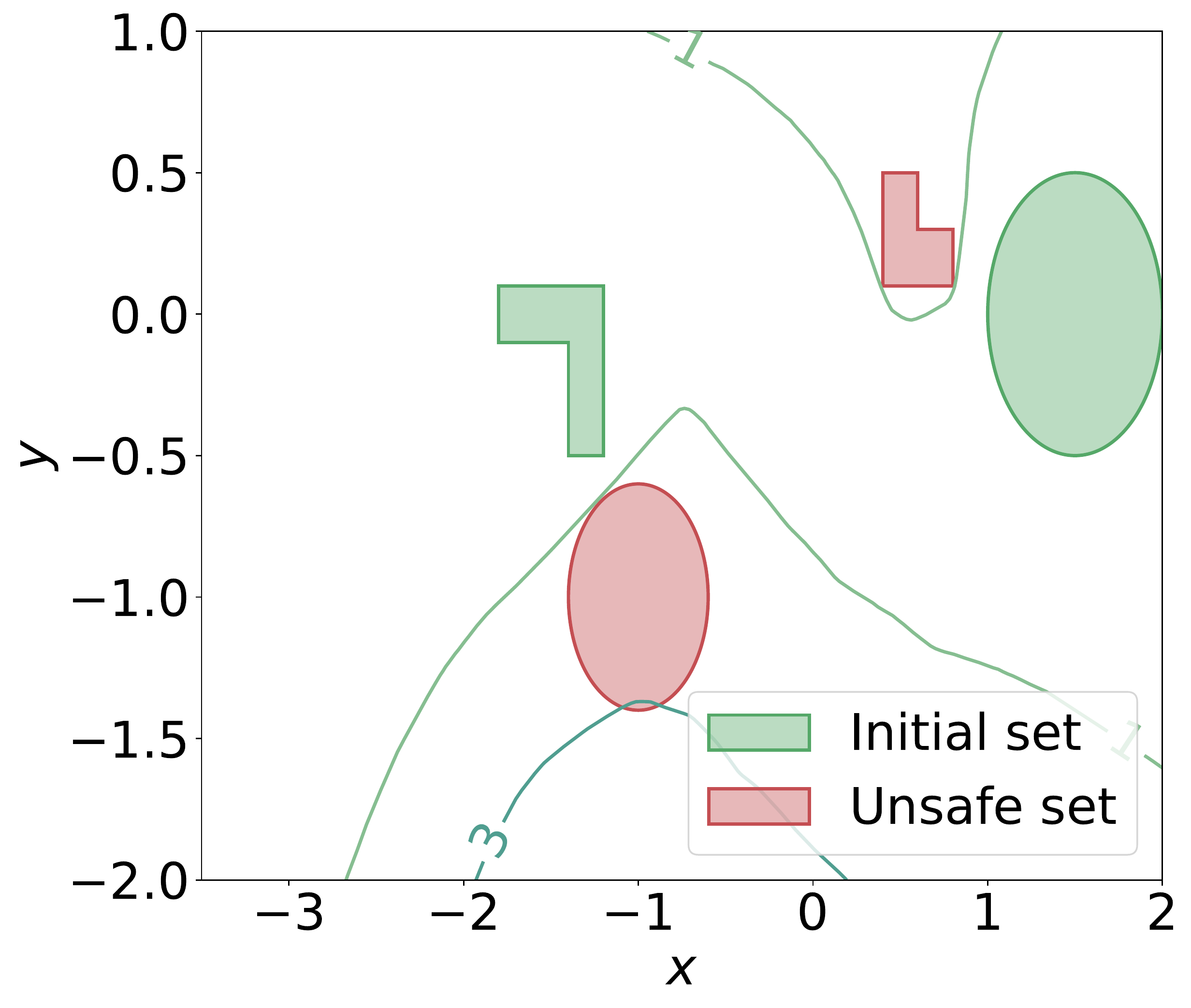}
        \caption{$\epsilon = 0.01$}
    \end{subfigure}
    \caption{Contour plots showcasing the impact of increasing $\epsilon$, the half width of the input hyperrectangle during training. Larger $\epsilon$ results in a flatter surface, yielding a smaller $\beta$ at the expense of a larger $\gamma$.}\label{fig:epsilon_contour}
\end{figure}

\end{document}